\documentclass[journal,twocolumn]{IEEEtran}

\usepackage{amsmath,amssymb,amsfonts}
\usepackage{amsthm}
\usepackage{algorithm}
\usepackage{algorithmic}
\usepackage{bm}
\usepackage{booktabs}
\usepackage{cite}
\usepackage{graphicx}
\usepackage{mathtools}
\usepackage{url}

\newcommand{\E}{\mathbb{E}}
\newcommand{\R}{\mathbb{R}}
\newcommand{\jmathunit}{\mathrm{j}}
\newtheorem{theorem}{Theorem}
\newtheorem{proposition}{Proposition}
\newtheorem{corollary}{Corollary}

\begin{document}

\title{Why Is Cubic-Phase Airy Beamforming Sufficient for Blockage Recovery?}

\author{Yi~Wang and Linglong~Dai,~\IEEEmembership{Fellow,~IEEE}}

\maketitle

\begin{abstract}
Blockage is a critical challenge for near-field communications, where reliable
transmission depends heavily on the line-of-sight (LoS) path and can suffer
severe power degradation when that path is obstructed. Near-field Airy beams
offer a promising solution for blockage mitigation by forming curved
trajectories that guide energy around obstacles, and can be practically
generated with phased arrays by imposing a cubic source phase. However,
trajectory-based interpretations explain how Airy beams propagate, but not why
cubic-phase Airy beamforming is sufficient for blockage recovery or how much
received-power gain the cubic term itself contributes. To answer these
questions, we identify the blockage-induced phase mismatch relative to
conventional near-field focusing and quantify how successive phase orders
compensate it. The resulting analysis reveals that the linear and quadratic
degrees of freedom, originally used to compensate the free-space geometric
phase, can be reoptimized under blockage to provide resteering and refocusing,
respectively. The quadratic term can compensate the dominant quadratic component
of the additional mismatch, while the Airy cubic provides the first
independent correction to the remaining non-quadratic mismatch. Simulations
show that linear and quadratic compensation recover most of the available
gain. The Airy cubic adds only $0.1433$ dB on average, yet enables the
cubic-phase family to attain $99.77\%$ of the phase-only upper bound.
Residual-phase analysis further determines when the remaining higher-order
components are negligible within a prescribed received-power tolerance. These
results explain why cubic-phase Airy beamforming is sufficient: lower-order
phase terms provide most of the recovery, while the cubic term closes nearly
all of the remaining gap.
\end{abstract}

\begin{IEEEkeywords}
Airy beam, blockage avoidance, near-field communications, phase compensation.
\end{IEEEkeywords}

\section{Introduction}

High-frequency wireless communications, particularly in the millimeter-wave
(mmWave) and terahertz (THz) bands, offer abundant bandwidth. However, weak
non-line-of-sight (NLoS) propagation at these frequencies makes reliable
transmission highly dependent on the line-of-sight (LoS)
path~\cite{2021THz,8732419}. Such dependence makes LoS blockage a critical
challenge, as blockage can severely degrade received power and communication
reliability~\cite{shurakov2023empirical}. Near-field Airy beams with curved
trajectories can guide energy around obstacles, providing a promising approach
to blockage mitigation~\cite{overview,airyfinite}.

\subsection{Prior Works}

Airy beams were originally predicted as nonspreading wave packets in quantum
physics, and were later demonstrated experimentally in
optics~\cite{airy1,airyfinite,airy5}. The unique propagation properties of Airy
beams, particularly their curved trajectories, have been extensively
investigated in optical systems~\cite{overview,selfhealing1,selfhealing2}.

To leverage their curved trajectories to bypass obstacles, recent studies have applied Airy beams to blockage mitigation in wireless communications. 
Theoretical and numerical studies have
shown that Airy beams preserve received power under blockage more effectively
than conventional beams~\cite{hanchong,songlingyang,wang2026multiairy}. To realize this capability in practical communication systems,
phased-array methods have been developed to generate Airy beams by imposing a
cubic phase profile across the transmit aperture~\cite{hanchong2}. Hardware experiments have further
validated the feasibility of Airy-beam transmission in wireless
systems~\cite{lee,curving}.

To enable systematic trajectory design of Airy beams, closed-form analysis further showed
how a set of Airy control parameters can tune the steering, focusing, and
bending of the generated trajectory~\cite{hanchong2}. Different parameter settings therefore generate a continuum of candidate
trajectories, many of which can bypass the obstacle and reach the receiver.
The design task is thus not merely to find a feasible trajectory, but to identify the one that maximizes the received power. To reduce the resulting search overhead, beam-training methods organize candidate Airy configurations through structured or hierarchical codebooks and progressively narrow the trajectory space~\cite{hanchong,zhao2026efficienttraining}.
Alternatively, physics-guided prediction methods learn the mapping from
blockage geometry directly to the corresponding Airy control parameters,
thereby avoiding an explicit search over candidate trajectories
\cite{wang2026airy}.

However, existing trajectory-based interpretations do not quantitatively
explain why cubic-phase Airy beamforming is sufficient for blockage recovery,
or how much additional received-power gain the cubic term actually provides
beyond conventional focusing schemes. This gap calls for a phase-compensation
perspective that directly isolates and quantifies the role of each phase
order.

\subsection{Our Contributions}

To resolve the unclear phase-order mechanism underlying Airy blockage
recovery, we unveil its physical essence from a phase-compensation
perspective. Specifically, we identify the additional phase mismatch
introduced by blockage relative to conventional near-field focusing, and
quantify how successive phase orders compensate this mismatch and contribute
to received-power recovery. The contributions are summarized as follows:

\begin{itemize}

\item We identify the blockage-induced phase mismatch beyond conventional
near-field focusing. By comparing the phase-matched solution under blockage
with the conventional free-space focusing phase, the additional phase
correction required for blockage recovery is isolated. This phase-matched
solution also provides the phase-only upper bound for quantifying how much of
the required correction can be captured by each realizable compensation
order.

\item Based on the blockage-induced phase mismatch, we reveal that the
linear and quadratic phase degrees of freedom of conventional focusing can be
reused for blockage recovery. These terms originally compensate the free-space geometric phase, while under
blockage their coefficients can be reoptimized to provide resteering and
refocusing compensation, respectively.
The quadratic adjustment compensates the dominant quadratic component of the
blockage-induced phase mismatch, while the Airy cubic phase provides the first
independent correction to the remaining non-quadratic mismatch.

\item Building on the identified roles of different phase orders, we
quantitatively verify the sufficiency of the Airy cubic for blockage recovery.
We characterize the residual phase mismatch after cubic compensation and
determine when the remaining higher-order components are negligible within a
prescribed received-power tolerance. We further confirm that the required
cubic compensation can be realized within the standard Airy control family
through a received-contribution-weighted phase projection.

\item Numerical results validate the identified phase-order mechanism and the
sufficiency of the Airy cubic. Linear and quadratic compensation recover
$3.223$ dB of the $3.376$-dB mean phase-only gain, leaving only a
$0.1532$-dB residual gap. The Airy cubic contributes a further $0.1433$ dB
and reduces this gap to $0.00991$ dB, reaching $99.77\%$ of the phase-only
upper bound. These results confirm that the lower-order phase degrees of
freedom provide most of the blockage-recovery gain, while extending the phase
to cubic order is sufficient to recover nearly all of the remaining
available gain.

\end{itemize}

\subsection{Organization and Notation}

\subsubsection{Organization}

Section~\ref{sec:system} introduces the near-field system, the single-edge
blockage model, and the hardware-compatible Airy phase basis.
Section~\ref{sec:ideal_phase} derives the blockage-aware effective channel and
the resulting blockage-induced phase mismatch.
Section~\ref{sec:analytical_design} develops the phase-compensation hierarchy
and its weighted Airy realization, and interprets the cubic term as the first
non-quadratic correction.
Section~\ref{sec:implementation} presents the analytical realization of the
phase-compensation framework and its computational complexity.
Section~\ref{sec:results} provides the numerical evaluation, followed by the
conclusion. The appendixes collect the local edge and geometry extensions,
projection details, and supporting scope and numerical checks.

\subsubsection{Notation}
Bold lowercase and uppercase letters denote vectors and matrices, respectively.
The superscripts $(\cdot)^{\mathsf T}$ and $(\cdot)^*$ denote transpose and
complex conjugation, respectively, and $\|\cdot\|_2$ denotes the Euclidean
norm. The indicator function is denoted by $\mathbf 1_{\mathcal X}(\cdot)$.
The imaginary unit is $\jmathunit=\sqrt{-1}$, $\E[\cdot]$ denotes expectation,
and $\mathcal O(\cdot)$ denotes complexity order.

\section{System Model and Airy Phase Representation}
\label{sec:system}

This section establishes the blocked near-field system model and the
structured phase representation used throughout the paper. It first
defines the transceiver geometry and single-edge blockage model, then
introduces the Gaussian-cubic phase family as a hardware-compatible
polynomial phase basis for subsequent phase-compensation analysis.

\subsection{Near-Field Geometry}

We consider a narrowband near-field communication system in the
$x$-$z$ plane. An $N$-element uniform linear array (ULA) is centered at the
origin and placed along the $x$-axis. The coordinate of its $n$-th element is
\begin{equation}
    x_n=\left(n-\frac{N+1}{2}\right)d_{\rm ant},\qquad n=1,\ldots,N,
    \label{eq:element_position}
\end{equation}
where $d_{\rm ant}\leq\lambda/2$ is the antenna-element spacing and $\lambda$
is the wavelength. The aperture width is $D=(N-1)d_{\rm ant}$, and its half-width is
$a=D/2$. Throughout this paper, a spatial point is represented as $(z,x)$,
where $z$ denotes the propagation distance and $x$ denotes the transverse
coordinate.

The receiver is centered at $(z_r,x_r)$ and is represented by an effective
one-dimensional aperture
\begin{equation}
    \mathcal R=
    \left[x_r-\frac{W_r}{2},x_r+\frac{W_r}{2}\right],
    \label{eq:receiver_window}
\end{equation}
where $W_r$ is the receiver-window width. The radiative near-field model
retains the distance-dependent phase variation across the finite aperture
rather than replacing it with a far-field steering vector.

\subsection{Single-Edge Blockage Model}

An opaque obstacle is located between the transmitter and receiver. Its
locally dominant boundary is represented by the edge $(z_o,x_e)$, where
$0<z_o<z_r$. The visible half-plane at $z=z_o$ is
\begin{equation}
    \mathcal V_s=\{x:s(x-x_e)>0\},\qquad s\in\{-1,+1\},
    \label{eq:visible_half_plane}
\end{equation}
where $s$ identifies the visible side. This model isolates the dominant
edge-diffraction contribution and provides an analytically tractable
representation of the blockage-induced wavefront distortion. The transmitter
is assumed to know
\begin{equation}
    \mathcal I=(z_o,x_e,s,z_r,x_r),
    \label{eq:geometry_information}
\end{equation}
from environment sensing or a higher-layer map, while complete post-blockage
channel state information is not assumed.
The resulting single-edge geometry is illustrated in
Fig.~\ref{fig:system_single_edge}.

\begin{figure}[!t]
    \centering
    \includegraphics[width=\columnwidth]{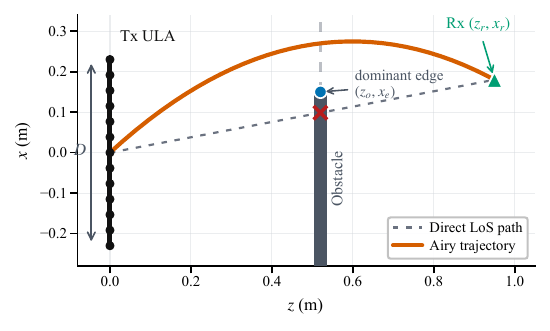}
    \caption{Single-edge blocked near-field link. A phase-only transmit (Tx)
    ULA of aperture $D$ serves the receiver (Rx) at $(z_r,x_r)$. The obstacle terminates at
    the dominant edge $(z_o,x_e)$ and blocks the direct propagation path. The
    orange curve illustrates the resulting propagated field trajectory.}
    \label{fig:system_single_edge}
\end{figure}

Under the scalar Fresnel approximation, the one-dimensional free-space kernel
over distance $z$ is~\cite{optics}
\begin{equation}
    K_z(x_2,x_1)=
    \frac{e^{\jmathunit kz}}{\sqrt{\jmathunit\lambda z}}
    \exp\left[
    \frac{\jmathunit\pi}{\lambda z}(x_2-x_1)^2
    \right],
    \label{eq:fresnel_kernel}
\end{equation}
where $k=2\pi/\lambda$. For a continuous aperture field $u_0(x_0)$
supported on $[-a,a]$, the blocked receiver-plane field is
\begin{equation}
\begin{aligned}
    E_{\rm b}(x)=
    \int_{-a}^{a}u_0(x_0)
    \int_{\mathcal V_s}
    K_{z_r-z_o}(x,\xi)
    K_{z_o}(\xi,x_0)
    \,\mathrm d\xi\,\mathrm dx_0.
\end{aligned}
\label{eq:two_step_blocked_field}
\end{equation}

The above propagation model defines the physical channel environment in which
the source phase is designed. We next introduce the structured Airy phase
basis used to represent the required compensation.

\subsection{Hardware-Compatible Airy Phase Basis}

To investigate how structured phase profiles compensate the blockage-induced
wavefront distortion, we consider the Gaussian-cubic Airy realization as a
hardware-compatible polynomial phase basis. All considered beams share the
same prescribed Gaussian aperture amplitude, and only the source phase is
varied. The continuous source field is
\begin{equation}
    u_0(x_0;\bm p)
    =
    A_G(x_0)e^{\jmathunit\phi_{\bm p}(x_0)},
    \qquad
    \bm p=(B,F,\theta),
    \label{eq:source_field}
\end{equation}
where
\begin{equation}
    A_G(x_0)=
    c_G\exp(-x_0^2/\omega_0^2)
    \mathbf 1_{[-a,a]}(x_0),
    \label{eq:gaussian_aperture}
\end{equation}
is a finite Gaussian envelope, where $\omega_0$ is the Gaussian width and
$c_G$ normalizes the transmit power.
The control triplet $\bm p$ is restricted to the physically feasible Airy
control set $\mathcal P_{\rm A}$.

Fixing the aperture amplitude across all methods isolates the contribution of
source phase. Amplitude control, such as visibility-based antenna
deactivation or independent amplitude optimization, changes the feasible
transmit architecture and is therefore outside the considered phase-only
framework. Within this architecture, the phase-matched reference
derived in Section~\ref{sec:ideal_phase} defines the upper bound attainable
with the prescribed aperture amplitude.

The Gaussian-cubic phase profile is
\begin{equation}
    \phi_{\bm p}(x_0)
    =
    \frac{(2\pi B)^3}{3}x_0^3
    -\frac{\pi}{\lambda F}x_0^2
    +\frac{2\pi\sin\theta}{\lambda}x_0.
    \label{eq:gaussian_cubic_phase}
\end{equation}

Here, $(B,F,\theta)$ are the conventional Airy beam control parameters that
parameterize the cubic phase basis through bending, focusing, and steering
degrees of freedom.
Conventional Airy generation interprets these parameters through a
geometry-defined construction, where $F$ and $\theta$ define the generation
configuration and $B$ controls the cubic-phase-induced trajectory curvature.
Fig.~\ref{fig:conventional_airy_generation} illustrates this design
paradigm.

\begin{figure}[!t]
    \centering
    \includegraphics[width=\columnwidth]{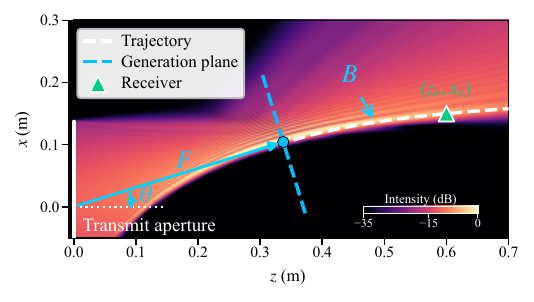}
    \caption{Conventional geometry-defined Airy generation: $F$ and $\theta$
    determine the generation configuration, while $B$ controls the subsequent
    trajectory curvature. No blockage-aware channel phase is considered.}
    \label{fig:conventional_airy_generation}
\end{figure}

In this paper, the triplet $(B,F,\theta)$ is retained only as a
hardware-compatible parameterization of the polynomial phase basis used to
represent the required compensation. Any resulting generation plane or
trajectory is interpreted as a propagated-field consequence rather than the
design objective. With the normalized aperture coordinate
$t=x_0/a$, \eqref{eq:gaussian_cubic_phase} becomes
\begin{equation}
    \phi_{\bm a}(t)=a_1t+a_2t^2+a_3t^3,
    \label{eq:normalized_phase}
\end{equation}
where $\bm a=(a_1,a_2,a_3)$ and
\begin{equation}
\begin{aligned}
	a_1&=\frac{2\pi a\sin\theta}{\lambda},&
	a_2&=-\frac{\pi a^2}{\lambda F},&
	a_3&=\frac{(2\pi B)^3a^3}{3}.
	\label{eq:control_coefficient_map}
\end{aligned}
\end{equation}
The map is one-to-one on the physical branch $F>0$ and
$|\sin\theta|<1$, with inverse
\begin{equation}
\begin{aligned}
	B&=\frac{\sqrt[3]{3a_3}}{2\pi a},&
	F&=-\frac{\pi a^2}{\lambda a_2},&
	\theta&=\arcsin\left(\frac{\lambda a_1}{2\pi a}\right),
	\label{eq:coefficient_control_inverse}
\end{aligned}
\end{equation}
where $\sqrt[3]{\cdot}$ denotes the real cube root.
For a practical ULA implementation, the continuous phase profile is sampled
on the antenna elements as
\begin{equation}
	\bm w(\bm p)
	=\frac{
	[A_G(x_1)e^{\jmathunit\phi_{\bm p}(x_1)},\ldots,
	A_G(x_N)e^{\jmathunit\phi_{\bm p}(x_N)}]^{\mathsf T}
	}{
	\|[A_G(x_1)e^{\jmathunit\phi_{\bm p}(x_1)},\ldots,
	A_G(x_N)e^{\jmathunit\phi_{\bm p}(x_N)}]\|_2
	}.
	\label{eq:discrete_beamforming_vector}
\end{equation}

This structured Airy phase family will be used in the following sections to
represent the phase compensation required by the blocked channel.

\section{Blockage-Induced Phase Mismatch}
\label{sec:ideal_phase}

The structured Airy phase basis in \eqref{eq:normalized_phase} specifies the
available source-phase space but does not reveal how blockage modifies the
required wavefront. This section extracts the missing phase requirement from
the blocked propagation model. It first combines the two Fresnel propagation
segments into a blockage-aware effective channel and then derives the
additional phase mismatch beyond the conventional free-space focusing phase.

\subsection{Blockage-Aware Effective Channel}

Each aperture point reaches the receiver through two Fresnel propagation
segments coupled by the obstacle edge. To expose how blockage modifies the
required source phase, we combine these operations into a single
blockage-aware effective channel coefficient. Let
$L=z_r-z_o$ and $L_{\rm eff}=z_oL/z_r$. Completing the square in the two
propagation kernels gives
\begin{equation}
    \frac{(\xi-x_0)^2}{z_o}+\frac{(x-\xi)^2}{L}
    =
    \frac{(x-x_0)^2}{z_r}
    +
    \frac{[\xi-\bar x_o(x_0;x)]^2}{L_{\rm eff}},
    \label{eq:complete_square}
\end{equation}
where
\begin{equation}
    \bar x_o(x_0;x)
    =
    \frac{Lx_0+z_ox}{z_r}.
    \label{eq:composed_center}
\end{equation}

Let
$\alpha=\sqrt{\pi/(\lambda L_{\rm eff})}$.
Using the principal Fresnel branch, the half-line integral becomes
\begin{equation}
\begin{aligned}
&\int_{\mathcal V_s}
\exp\left[
\frac{\jmathunit\pi}{\lambda L_{\rm eff}}
(\xi-\bar x_o)^2
\right]\mathrm d\xi\\
&\quad=
\frac{\sqrt{\lambda L_{\rm eff}}e^{\jmathunit\pi/4}}{2}
\operatorname{erfc}
\left[
s e^{-\jmathunit\pi/4}
\alpha(x_e-\bar x_o)
\right].
\end{aligned}
\label{eq:fresnel_half_line}
\end{equation}
Here, $\operatorname{erfc}(\cdot)$ denotes the complementary error function.

For $s=+1$ and $s=-1$, \eqref{eq:fresnel_half_line} corresponds to the
integration regions $[x_e,+\infty)$ and $(-\infty,x_e]$, respectively. The
common full-line factor combines with the Fresnel kernels to yield the direct
kernel $K_{z_r}$. Therefore,
\begin{equation}
    \int_{\mathcal V_s}
    K_L(x,\xi)K_{z_o}(\xi,x_0)\mathrm d\xi
    =
    K_{z_r}(x,x_0)T_{\rm e}(x_0;x),
    \label{eq:edge_kernel_composition}
\end{equation}
where
\begin{equation}
    T_{\rm e}(x_0;x)
    =
    \frac{1}{2}
    \operatorname{erfc}
    \left[
    s e^{-\jmathunit\pi/4}
    \sqrt{\frac{\pi}{\lambda L_{\rm eff}}}
    (x_e-\bar x_o(x_0;x))
    \right].
    \label{eq:edge_transfer}
\end{equation}

The term $T_{\rm e}(x_0;x)$ represents the edge-induced channel distortion,
including both amplitude attenuation and phase variation. Hence, the blocked
field can be reduced exactly within the scalar Fresnel model to
\begin{equation}
    E_{\rm b}(x;\bm a)
    =
    \int_{-a}^{a}
    A_G(x_0)
    h_{\rm e}(x_0;x)
    e^{\jmathunit\phi_{\bm a}(x_0/a)}
    \,\mathrm dx_0,
    \label{eq:single_integral_field}
\end{equation}
where
\begin{equation}
    h_{\rm e}(x_0;x)
    =
    K_{z_r}(x,x_0)T_{\rm e}(x_0;x).
    \label{eq:effective_aperture_channel}
\end{equation}

Therefore, $h_{\rm e}(x_0;x)$ describes the blockage-aware effective channel
from the transmit aperture point $x_0$ to the receiver-plane point $x$. The
free-space kernel provides the conventional receiver-directed focusing phase,
whereas the edge-diffraction response introduces the blockage-induced
amplitude attenuation and additional phase distortion.

\subsection{Phase-Matched Reference and Phase Mismatch}

For the phase-compensation analysis, we consider the point-receiver
received-power objective
\begin{equation}
    P_{\rm pt}(u_0;\mathcal I)
    =
    |E_{\rm b}(x_r)|^2 ,
    \label{eq:point_received_power}
\end{equation}
where the physical meaning of phase compensation is most explicit. The
finite receiver-window objective is considered in the numerical evaluation.

For a point receiver and a fixed aperture amplitude, the maximum achievable
received power is obtained when all surviving aperture contributions are
phase aligned. Using \eqref{eq:effective_aperture_channel} at $x_r$, write
\begin{equation}
    h_{\rm e}(x_0;x_r)
    =
    |h_{\rm e}(x_0;x_r)|
    e^{\jmathunit\psi_{\rm h}(x_0)},
    \label{eq:effective_channel_polar}
\end{equation}
and define the received-contribution weight
\begin{equation}
    W(x_0)
    =
    A_G(x_0)|h_{\rm e}(x_0;x_r)|.
    \label{eq:continuous_weight}
\end{equation}
The triangle inequality gives
\begin{equation}
    |E_{\rm b}(x_r)|
    \leq
    \int_{-a}^{a}W(x_0)\,\mathrm dx_0 .
    \label{eq:triangle_bound}
\end{equation}
Equality is achieved when the contributions from all aperture points are
coherently aligned. The resulting phase-matched reference is used only as an
analytical reference to expose the required compensation, rather than as a
practical beamforming design. It is given by
\begin{equation}
    \boxed{
    \phi_{\rm ideal}(x_0)
    =
    -\arg h_{\rm e}(x_0;x_r)+\phi_0 ,
    }
    \label{eq:ideal_phase}
\end{equation}
where $\phi_0$ is an arbitrary common phase offset.
Substituting
\eqref{eq:effective_aperture_channel} into
\eqref{eq:ideal_phase} yields
\begin{equation}
    \phi_{\rm ideal}
    =
    -\arg K_{z_r}
    -
    \arg T_{\rm e}
    +
    \phi_0 .
    \label{eq:phase_decomposition}
\end{equation}
Using the normalized aperture coordinate $t=x_0/a$, the conventional
free-space focusing phase is defined as
\begin{equation}
    \phi_{\rm F}(t)
    =
    -\arg K_{z_r}(x_r,at)+\phi_0 .
    \label{eq:free_space_focused_phase}
\end{equation}
The additional phase requirement introduced by blockage is therefore
\begin{equation}
    \boxed{
    \Delta\phi_{\rm e}(t)
    =
    \phi_{\rm ideal}(at)-\phi_{\rm F}(t)
    =
    -\arg T_{\rm e}(at;x_r).
    }
    \label{eq:blockage_phase_mismatch}
\end{equation}
All phases are evaluated on continuous branches over the illuminated aperture,
and common phase offsets do not affect the received power. Equation
\eqref{eq:blockage_phase_mismatch} is the central variable of this paper:
conventional focusing compensates the free-space geometric phase, whereas
$\Delta\phi_{\rm e}$ captures the additional phase correction required beyond
conventional focusing by the blockage-aware channel. It vanishes in the
unobstructed limit and generally varies nonlinearly across the aperture under
blockage.

Fig.~\ref{fig:phase_compensation_interpretation} summarizes the resulting
phase-compensation view after the effective channel has been formulated and
the phase mismatch has been extracted.

\begin{figure*}[!t]
    \centering
    \includegraphics[width=0.94\textwidth]{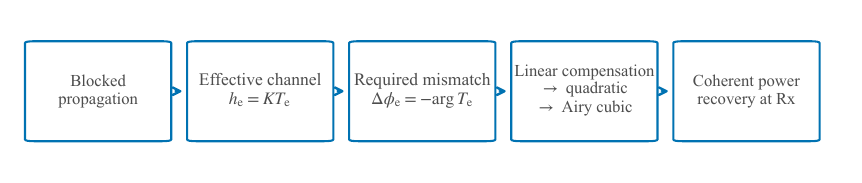}
    \caption{Phase-compensation interpretation of Airy recovery under edge
    blockage. Rather than prescribing a generation plane or trajectory,
    blocked propagation determines the effective channel $h_{\rm e}$ and the
    additional phase mismatch $\Delta\phi_{\rm e}$. Successive linear, quadratic,
    and cubic phase components approximate the required phase compensation and
    enable coherent received-power recovery. The trajectory follows afterward
    as a propagated-field consequence.}
    \label{fig:phase_compensation_interpretation}
\end{figure*}

Equation~\eqref{eq:ideal_phase} provides the phase-only upper bound for
evaluating realizable compensation stages. It defines the blockage-induced
phase requirement independently of the Airy parameterization. The extracted
mismatch $\Delta\phi_{\rm e}$ therefore provides the basis for the
phase-compensation hierarchy developed in the following section.

Finite-window perturbation analysis and its numerical validation are provided
in Appendices~\ref{app:theoretical_details}
and~\ref{app:numerical_validation}, respectively.

\section{Phase-Compensation Hierarchy and Airy Phase Interpretation}
\label{sec:analytical_design}

The central object is the blockage-induced mismatch
$\Delta\phi_{\rm e}$ defined in \eqref{eq:blockage_phase_mismatch}.
This section reveals how the required phase compensation can be progressively
approximated by linear, quadratic, and cubic components. The Gaussian-cubic
Airy family is then interpreted as a hardware-compatible realization of this
compensation hierarchy.
Fig.~\ref{fig:phase_hierarchy_schematic} previews the resulting hierarchy.

\begin{figure}[!t]
	\centering
	\includegraphics[width=\columnwidth]{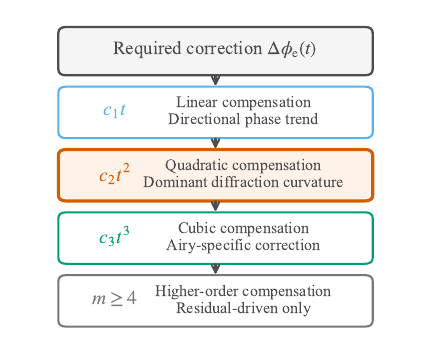}
	\caption{Phase-compensation hierarchy for blockage recovery. Linear,
	quadratic, and cubic phase components progressively approximate the
	blockage-induced mismatch. The resulting Airy trajectory emerges only as a
	propagated-field consequence of the compensated source phase.}
	\label{fig:phase_hierarchy_schematic}
\end{figure}

\subsection{Phase-Compensation Hierarchy}
\label{subsec:nested_compensation}

Because conventional near-field focusing already contains receiver-directed
linear and quadratic terms, the order below refers to the \emph{additional}
edge-induced phase compensation beyond that free-space focusing phase. Globally over the
aperture, we represent this compensation by the hierarchy
\begin{equation}
	\Delta\phi_{\rm e}(t)
	\approx c_0+c_1t+c_2t^2+c_3t^3+\cdots.
	\label{eq:blockage_phase_hierarchy}
\end{equation}
Here, $c_0$ is an irrelevant common phase, and $c_\ell$ is the
order-$\ell$ blockage-compensation coefficient for $\ell=1,2,3$.
This approximation describes the global aperture-phase compensation required by
the blocked channel. The linear term provides additional steering
compensation, while the quadratic degree of freedom enables refocusing
compensation by adjusting the existing focusing curvature. Under blockage,
this quadratic degree of freedom becomes the dominant mechanism for
compensating the curvature mismatch. The Airy cubic component is the first
non-quadratic correction.

Fig.~\ref{fig:phase_mismatch_extraction} extracts the blockage-induced mismatch
approximated by this hierarchy. Comparing the phase-matched reference with
conventional free-space focusing isolates the additional requirement
$\Delta\phi_{\rm e}$ introduced by blockage.
\begin{figure}[!t]
	\centering
	\includegraphics[width=\columnwidth]{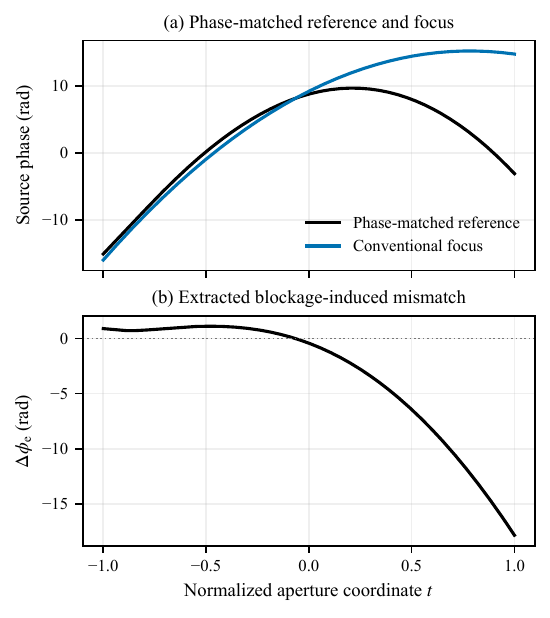}
	\caption{Extraction of the blockage-induced phase mismatch.
	(a) The phase-matched reference differs from conventional free-space
	focusing under blockage. (b) Removing the free-space focusing phase isolates
	$\Delta\phi_{\rm e}$. Common phase offsets are removed because they do not
	affect received power.}
	\label{fig:phase_mismatch_extraction}
\end{figure}

The reference phase $\phi_{\rm F}(t)$ is defined in
\eqref{eq:free_space_focused_phase}; it already contains the receiver-directed
linear term and near-field quadratic curvature. Let $P_{\rm F}$ denote its
exact received power. To separate the role of each added correction, define
\begin{equation}
\begin{aligned}
	\mathcal S_1
	&=\{\phi_{\rm F}+c_0+c_1t\},\\
	\mathcal S_2
	&=\{\phi_{\rm F}+c_0+c_1t+c_2t^2\},\\
	\mathcal S_3
	&=\{\phi_{\rm F}+c_0+c_1t+c_2t^2+c_3t^3\}.
	\label{eq:nested_compensation_spaces}
\end{aligned}
\end{equation}
These spaces are nested:
\begin{equation}
	\{\phi_{\rm F}\}\subset\mathcal S_1
	\subset\mathcal S_2\subset\mathcal S_3.
	\label{eq:nested_phase_inclusion}
\end{equation}
Here, $\mathcal S_1$ adds additional linear compensation,
$\mathcal S_2$ adds refocusing compensation, and $\mathcal S_3$ introduces the
independent Airy cubic correction. Although the quadratic correction is
implemented by changing the focusing coefficient, its received-power gain
originates from compensating the blockage-induced curvature mismatch rather
than merely relocating the focal point.
Mathematically, the cubic term is the first non-quadratic phase degree of
freedom. The
physically feasible subset of $\mathcal S_3$ is precisely the Gaussian-cubic
Airy phase family in \eqref{eq:normalized_phase}.
The steering-only baseline used later is distinct from $\mathcal S_1$: it
omits the free-space focusing curvature, whereas every space
$\mathcal S_m$ starts from the conventional free-space focusing phase.

For a fair gain decomposition, let $P_m^\star$ be the maximum exact received power
over $\mathcal S_m$ within the adopted control domain. From
\eqref{eq:nested_phase_inclusion},
\begin{equation}
	P_{\rm F}\leq P_1^\star\leq P_2^\star\leq P_3^\star.
	\label{eq:nested_power_order}
\end{equation}
These best-power references quantify the power made available at each
beamforming stage. They serve only as evaluation references rather than the
proposed online procedure. The one-shot Airy-stage approximation is derived
next.

\subsection{Weighted Phase Projection and Airy Realization}

The compensation hierarchy identifies the phase compensation required by the
blocked channel. We next project it onto the Gaussian-cubic Airy phase basis,
rather than searching over Airy configurations through iterative
received-power optimization. Because
aperture points with larger $W$ in \eqref{eq:continuous_weight} contribute more strongly to the
received field, the projection weights their phase errors according to their
received contribution. To construct this contribution-weighted projection,
let
\begin{equation}
	\bm X(t)=[1,t,t^2,t^3]^{\mathsf T},\qquad
	\psi_{\rm req}(t)=-\psi_{\rm h}(at),
	\label{eq:projection_basis}
\end{equation}
and define
\begin{equation}
	Z_W=\int_{-1}^{1}W(at)\,\mathrm dt,\qquad
	\E_W[f]=\frac{1}{Z_W}\int_{-1}^{1}W(at)f(t)\,\mathrm dt.
\end{equation}
Because the free-space focusing phase is quadratic, fitting
$\psi_{\rm req}$ through degree two or higher is equivalent to starting from
$\phi_{\rm F}$ and fitting the blockage-induced mismatch
$\Delta\phi_{\rm e}$. Thus, the fitted linear and quadratic coefficients
combine the conventional focus with the required linear and quadratic
compensation, while extending the fit to cubic order adds the first
non-quadratic correction.
The constant component in
$\bm q=[c_0,a_1,a_2,a_3]^{\mathsf T}$ absorbs the irrelevant common phase.

When the fitted phase remains close enough to the phase-matched reference for the aperture
contributions to combine coherently, received-power loss is governed to
second order by the weighted phase-error variance. The following theorem
makes this relation explicit.

\begin{theorem}[Small-mismatch phase fitting]
\label{thm:phase_projection}
Let
\begin{equation}
	\delta(t)
	=\bm q^{\mathsf T}\bm X(t)-\psi_{\rm req}(t)
	-\E_W[\bm q^{\mathsf T}\bm X-\psi_{\rm req}],
	\label{eq:centered_phase_residual}
\end{equation}
be the centered phase mismatch. If
$\|\delta\|_{\infty}\leq\varepsilon_{\phi}$ for some
$\varepsilon_{\phi}>0$, then
\begin{equation}
	\frac{|E_{\rm b}(x_r)|^2}{(aZ_W)^2}
	=1-\E_W[\delta^2]
	+\mathcal O\bigl(\E_W|\delta|^3\bigr).
	\label{eq:coherent_power_expansion}
\end{equation}
Consequently, the second-order power approximation in the cubic phase family
is maximized by
\begin{equation}
	\boxed{
	\widehat{\bm q}
	=\left(\int_{-1}^{1}W(at)\bm X\bm X^{\mathsf T}\,\mathrm dt\right)^{-1}
	\int_{-1}^{1}W(at)\bm X\psi_{\rm req}\,\mathrm dt,
	}
	\label{eq:weighted_phase_projection}
\end{equation}
provided that the weighted Gram matrix is nonsingular.
\end{theorem}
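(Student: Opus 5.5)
I start by rewriting the received field in terms of the weight and the residual phase. Substituting \eqref{eq:effective_channel_polar} and \eqref{eq:continuous_weight} into \eqref{eq:single_integral_field} with $x_0=at$ gives
\begin{equation*}
E_{\rm b}(x_r)=a\int_{-1}^{1}W(at)\,e^{\jmathunit(\bm q^{\mathsf T}\bm X(t)+\psi_{\rm h}(at))}\,\mathrm dt
=a\int_{-1}^{1}W(at)\,e^{\jmathunit(\bm q^{\mathsf T}\bm X(t)-\psi_{\rm req}(t))}\,\mathrm dt .
\end{equation*}
The uncentered residual is $\delta(t)+\mu$ with $\mu=\E_W[\bm q^{\mathsf T}\bm X-\psi_{\rm req}]$. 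The constant $\mu$ contributes only the unit-modulus factor $e^{\jmathunit\mu}$, so
\begin{equation*}
|E_{\rm b}(x_r)|=aZ_W\,\bigl|\E_W[e^{\jmathunit\delta}]\bigr|,
\end{equation*}
and by construction $\E_W[\delta]=0$. This reduces the statement to an expansion of the weighted characteristic function of a centered phase error.

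Next, I Taylor-expand $e^{\jmathunit\delta}=1+\jmathunit\delta-\delta^2/2+r(\delta)$ with the uniform remainder bound $|r(\delta)|\le|\delta|^3/6$. Taking $\E_W$ and using $\E_W[\delta]=0$ gives
\begin{equation*}
\E_W[e^{\jmathunit\delta}]=1-\tfrac12\E_W[\delta^2]+R,\qquad |R|\le\tfrac16\E_W|\delta|^3 .
\end{equation*}
Squaring the modulus yields $1-\E_W[\delta^2]+\tfrac14(\E_W[\delta^2])^2+2\,\mathrm{Re}\,R+|R|^2+(\text{cross terms})$. The only delicate step is absorbing every term beyond $-\E_W[\delta^2]$ into $\mathcal O(\E_W|\delta|^3)$. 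Here the hypothesis $\|\delta\|_\infty\le\varepsilon_\phi$ is used. By Jensen, $(\E_W[\delta^2])^{3/2}\le\E_W|\delta|^3$. Moreover, $(\E_W[\delta^2])^2\le\varepsilon_\phi\,\E_W|\delta|^3$, because $\delta^4\le\varepsilon_\phi|\delta|^3$ pointwise gives $(\E_W\delta^2)^2\le\E_W\delta^4\le\varepsilon_\phi\E_W|\delta|^3$. The cross terms such as $\E_W[\delta^2]\,|R|$ are bounded the same way. So every remaining term is $\mathcal O(\E_W|\delta|^3)$, with a constant depending only on $\varepsilon_\phi$, and \eqref{eq:coherent_power_expansion} follows.

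For the second claim, I drop the remainder and maximize $1-\E_W[\delta^2]$, which is the same as minimizing the weighted variance $\mathrm{Var}_W(\bm q^{\mathsf T}\bm X-\psi_{\rm req})$. Because $\bm X$ contains the constant $1$, the weighted variance equals $\min_{c}\E_W[(\bm q^{\mathsf T}\bm X-\psi_{\rm req}-c)^2]$. Any constant shift can be absorbed into $q_0$ without changing the variance. Hence minimizing the variance over $\bm q$ is equivalent to the weighted least-squares problem $\min_{\bm q}\int_{-1}^1 W(at)(\bm q^{\mathsf T}\bm X-\psi_{\rm req})^2\,\mathrm dt$. This objective is a convex quadratic in $\bm q$. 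Its normal equations are $\bigl(\int W\bm X\bm X^{\mathsf T}\bigr)\bm q=\int W\bm X\psi_{\rm req}$, and when the Gram matrix is nonsingular (hence positive definite) they have the unique solution \eqref{eq:weighted_phase_projection}.

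I expect the main obstacle to be the bookkeeping of the remainder in the squared modulus, specifically showing that the quartic term $(\E_W\delta^2)^2$ is dominated by $\E_W|\delta|^3$. The plan above handles it with the sup-norm bound. A secondary point I will state explicitly is that the maximizer is for the second-order surrogate and not for the exact power. I will also note that $\psi_{\rm req}$ is taken on a continuous branch, as stipulated after \eqref{eq:blockage_phase_mismatch}, so that a small $\delta$ is meaningful.
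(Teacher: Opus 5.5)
Your proposal is correct and follows essentially the same route as the paper. You factor out the weighted mean phase to reduce the problem to $\E_W[e^{\jmathunit\delta}]$, Taylor-expand with the $\frac{1}{6}\E_W|\delta|^3$ remainder, absorb the higher-order terms using the sup-norm bound (your explicit $(\E_W\delta^2)^2\le\varepsilon_\phi\E_W|\delta|^3$ and the positive-definiteness remark just fill in details the paper states tersely), and then use the constant basis function to turn variance minimization into the weighted least-squares normal equations.
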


This result converts the nonlinear coherent-combining objective into a
fixed-dimensional weighted phase projection.
The result follows by expanding the weighted coherent sum around zero phase
mismatch. Appendix~\ref{app:theoretical_details} provides the complete remainder
bound and normal-equation derivation.

The fitted coefficients map one-to-one to $(B,F,\theta)$ through
\eqref{eq:coefficient_control_inverse}, providing the hardware-compatible Airy
realization of the identified phase compensation.

\subsection{Airy Cubic as the First Non-Quadratic Correction}

The phase-compensation perspective reveals that the Airy cubic is not the
origin of the dominant blockage-recovery gain; instead, it compensates the
residual non-quadratic mismatch left after lower-order compensation. The
quadratic component has already removed the dominant blockage-induced
curvature mismatch.
The raw basis $t^3$ is not an independent Airy-specific correction under the
weighted fit because it also contains components that can be represented by
constant, linear, and quadratic phases. To isolate only the new action beyond
linear and quadratic compensation, we remove these lower-order components from
$t^3$. The purpose of this projection is to isolate the cubic phase action that
cannot be reproduced by steering and refocusing. The free-space phase
$-\arg K_{z_r}$ belongs to
\begin{equation}
	\mathcal V_2=\operatorname{span}\{1,t,t^2\}.
\end{equation}
Let $\Pi_2^W$ denote weighted projection onto $\mathcal V_2$. Define
\begin{equation}
	r_3(t)=t^3-\Pi_2^W(t^3),
	\label{eq:orthogonal_cubic_basis}
\end{equation}
as the independent Airy cubic component after removing all weighted
lower-order compensation components. Mathematically, this is an orthogonal cubic
basis under the weighted inner product
\begin{equation}
	\langle f,g\rangle_W
	=\int_{-1}^{1}W(at)f(t)g(t)\,\mathrm dt.
	\label{eq:weighted_inner_product}
\end{equation}
The induced norm is $\|f\|_W^2=\langle f,f\rangle_W$. Define
\begin{equation}
	\mathcal V_3=\mathcal V_2\oplus\operatorname{span}\{r_3\},
\end{equation}
and let $\Pi_3^W$ denote weighted projection onto $\mathcal V_3$. For
$m\in\{2,3\}$, define
\begin{equation}
	e_m=(I-\Pi_m^W)\psi_{\rm req},
\end{equation}
where $I$ is the identity operator. Let $V_m=\E_W[e_m^2]$ be the weighted
variance of the phase mismatch
remaining after stage $m$. Because the free-space phase already belongs to
$\mathcal V_2$, only the edge-induced phase determines the fitted Airy cubic
coefficient:
\begin{equation}
	\boxed{
	a_3^{\rm proj}
	=\frac{\langle r_3,e_2\rangle_W}
	{\langle r_3,r_3\rangle_W}
	=-\frac{\langle r_3,\arg T_{\rm e}\rangle_W}
	{\langle r_3,r_3\rangle_W}.
	}
	\label{eq:cubic_edge_projection}
\end{equation}
Here, $e_2$ is the residual phase mismatch after linear and quadratic
compensation. Adding the Airy cubic can remove only the part of $e_2$ captured
by the independent component $r_3$. The next proposition quantifies this
physical statement.
\begin{proposition}[Cubic phase-mismatch capture]
\label{prop:cubic_capture}
The phase mismatch after adding the best cubic correction satisfies
\begin{equation}
	e_3=e_2-
	\frac{\langle e_2,r_3\rangle_W}
	{\langle r_3,r_3\rangle_W}r_3,
\end{equation}
and hence
\begin{equation}
	\|e_2\|_W^2-\|e_3\|_W^2
	=\frac{\langle e_2,r_3\rangle_W^2}
	{\langle r_3,r_3\rangle_W}.
	\label{eq:cubic_residual_reduction}
\end{equation}
For $V_2>0$, the fraction of phase-mismatch variance left after quadratic
compensation and captured by the Airy cubic is therefore
\begin{equation}
	\boxed{
	\eta_3
	=\frac{V_2-V_3}{V_2}
	=\frac{\langle e_2,r_3\rangle_W^2}
	{\langle e_2,e_2\rangle_W\langle r_3,r_3\rangle_W},
	}
	\label{eq:cubic_capture_factor}
\end{equation}
Thus, $\eta_3$ directly quantifies how much of the residual mismatch after
refocusing can be captured by the independent Airy cubic component.
\end{proposition}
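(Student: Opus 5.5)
The argument is pure Hilbert-space geometry in the weighted inner product $\langle\cdot,\cdot\rangle_W$ of \eqref{eq:weighted_inner_product}. Since $W(at)>0$ on a set of positive measure, $\langle\cdot,\cdot\rangle_W$ is a genuine inner product on polynomials of degree at most three. This is the nonsingular Gram-matrix condition of Theorem~\ref{thm:phase_projection}, and it implies $r_3\neq 0$, so $\langle r_3,r_3\rangle_W>0$. The plan has three steps. First, I will show that $\Pi_3^W=\Pi_2^W+P_{r_3}$, where $P_{r_3}f=\langle f,r_3\rangle_W r_3/\langle r_3,r_3\rangle_W$. Second, I will read off $e_3$. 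Third, I will apply Pythagoras to obtain the variance reduction and $\eta_3$.

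\textbf{Step 1: the projection splits.} By construction \eqref{eq:orthogonal_cubic_basis}, $r_3=(I-\Pi_2^W)t^3$ is $W$-orthogonal to $\mathcal V_2$. Hence $\mathcal V_3=\mathcal V_2\oplus\operatorname{span}\{r_3\}$ is an orthogonal direct sum, and the orthogonal projector onto it is $\Pi_3^W=\Pi_2^W+P_{r_3}$. To check this, note that for any $f$ the candidate $\Pi_2^Wf+P_{r_3}f$ lies in $\mathcal V_3$. The remainder $f-\Pi_2^Wf-P_{r_3}f$ is orthogonal to $\mathcal V_2$, because both $f-\Pi_2^Wf$ and $r_3$ are. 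It is also orthogonal to $r_3$, by the definition of $P_{r_3}$ together with $\langle\Pi_2^Wf,r_3\rangle_W=0$. This orthogonality characterizes the projection.

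\textbf{Step 2: the formula for $e_3$.} Applying Step 1 gives
\[
e_3=(I-\Pi_2^W-P_{r_3})\psi_{\rm req}=e_2-P_{r_3}\psi_{\rm req}.
\]
Since $\Pi_2^W\psi_{\rm req}\perp r_3$, we have $\langle\psi_{\rm req},r_3\rangle_W=\langle e_2,r_3\rangle_W$. This yields the stated expression for $e_3$. The same identity, combined with the fact that $-\arg K_{z_r}\in\mathcal V_2$ is annihilated by $I-\Pi_2^W$, also justifies the second equality in \eqref{eq:cubic_edge_projection}. I will note this but do not need it here.

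\textbf{Step 3: Pythagoras and the capture fraction.} Write $e_2=e_3+P_{r_3}e_2$. Here $e_3\perp r_3$ by Step 1, so the two summands are orthogonal. Pythagoras then gives
\[
\|e_2\|_W^2=\|e_3\|_W^2+\frac{\langle e_2,r_3\rangle_W^2}{\langle r_3,r_3\rangle_W},
\]
which is \eqref{eq:cubic_residual_reduction}. Next, $V_m=\E_W[e_m^2]=\|e_m\|_W^2/Z_W$, so the common factor $Z_W$ cancels in the ratio $(V_2-V_3)/V_2$. Dividing by $\|e_2\|_W^2=\langle e_2,e_2\rangle_W>0$, which holds because $V_2>0$, gives \eqref{eq:cubic_capture_factor}. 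By Cauchy--Schwarz, $\eta_3\in[0,1]$.

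One point needs care: $V_m$ should coincide with the centered variance used in Theorem~\ref{thm:phase_projection}. It does, because $1\in\mathcal V_2$, so every residual $e_m$ is $W$-orthogonal to constants and therefore already has zero weighted mean. I do not expect a real obstacle. The only delicate step is the projector splitting in Step 1, which depends on $r_3$ being exactly orthogonal to $\mathcal V_2$ and nonzero.
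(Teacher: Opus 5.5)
Your proposal is correct and follows the same route as the paper's proof. Both arguments use three ingredients: the orthogonal splitting $\mathcal V_3=\mathcal V_2\oplus\operatorname{span}\{r_3\}$, the identity $\langle\psi_{\rm req},r_3\rangle_W=\langle e_2,r_3\rangle_W$, and the Pythagorean identity followed by division by $\|e_2\|_W^2$. Your extra checks fill in details the paper leaves implicit, namely $\langle r_3,r_3\rangle_W>0$, the cancellation of $Z_W$ between $V_m$ and $\|e_m\|_W^2$, and the zero weighted mean of $e_m$.
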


A complete proof is provided in Appendix~\ref{app:theoretical_details}.

Let $P_m^{\rm proj}$ denote the exact point-receiver power produced by the
stage-$m$ phase fit and let $P_{\rm ub}=(aZ_W)^2$. Applying
Theorem~\ref{thm:phase_projection} to $e_2$ and $e_3$ in the small-mismatch regime
gives
\begin{equation}
\begin{aligned}
	\frac{P_3^{\rm proj}-P_2^{\rm proj}}
	{P_{\rm ub}-P_2^{\rm proj}}
	&=\eta_3+
	\mathcal O\left(\frac{R_2+R_3}{V_2}\right),\\
	R_m&=\E_W[|e_m|^3].
\end{aligned}
	\label{eq:cubic_power_gap_capture}
\end{equation}
Equivalently, the leading-order cubic gain in decibels is
\begin{equation}
	\Delta G_3
	=\frac{10}{\ln 10}\eta_3V_2
	+\mathcal O(V_2^2+R_2+R_3).
	\label{eq:cubic_gain_factorization}
\end{equation}
Equations~\eqref{eq:cubic_capture_factor}--\eqref{eq:cubic_gain_factorization}
separate two communication roles: $V_2$ quantifies the residual phase mismatch after
linear and quadratic compensation, while $\eta_3$ measures how much of this
mismatch is captured by the Airy cubic. Therefore, a small cubic gain does not
indicate a weak cubic correction; it reflects that the dominant quadratic
stage leaves only a small residual phase mismatch.
The Airy cubic component is not the origin of the blockage-recovery gain; instead, it
compensates the lowest-order non-quadratic component of that residual. Within
the Gaussian-cubic Airy family, the cubic term therefore represents the first
Airy-specific phase action beyond steering and refocusing compensation.

Fig.~\ref{fig:hierarchical_phase_compensation} visualizes the successive
residual reduction for the same representative geometry.
\begin{figure}[!t]
	\centering
	\includegraphics[width=\columnwidth]{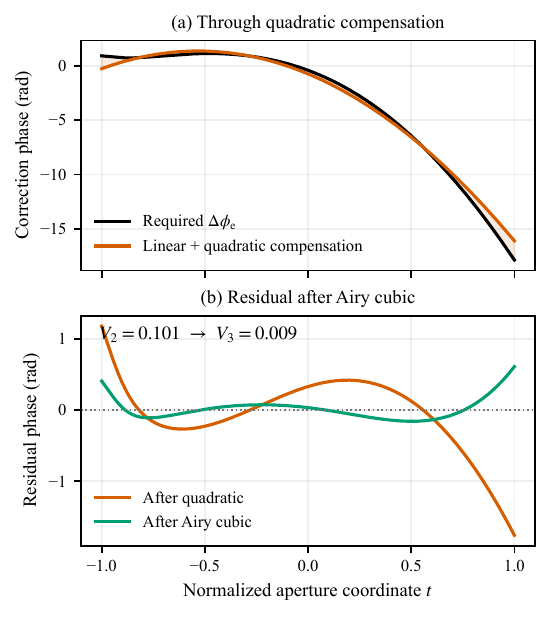}
	\caption{Hierarchical compensation of the blockage-induced phase mismatch.
	(a) Linear compensation removes the first-order variation, while quadratic
	compensation removes the dominant curvature variation of $\Delta\phi_{\rm e}$.
	(b) The Airy cubic component reduces the remaining non-quadratic
	residual from variance $V_2$ to $V_3$.}
	\label{fig:hierarchical_phase_compensation}
\end{figure}

\section{Analytical Airy Realization Framework}
\label{sec:implementation}

The phase hierarchy identifies the physical mechanism of blockage recovery.
This section converts the identified phase-compensation requirement into a
practical Gaussian-cubic Airy realization. Specifically, the required phase
compensation is extracted from the blocked channel, projected onto the
polynomial Airy phase basis, and mapped to realizable Airy control parameters.

\subsection{One-Shot Airy Realization}

The analytical realization combines Theorem~\ref{thm:phase_projection} with
the Airy coefficient mapping in \eqref{eq:coefficient_control_inverse}. It
provides a direct geometry-to-Airy mapping through three stages: constructing
the blockage-aware phase-matched target, projecting this target onto the
Gaussian-cubic Airy basis, and converting the resulting coefficients into
Airy controls.

\noindent\textbf{1) Blockage-aware phase-matched target:}
Given the sensed geometry $\mathcal I$ and antenna positions $\{x_n\}$, the
first stage constructs the discrete phase target required for coherent
recovery. The effective channel is evaluated according to
\eqref{eq:effective_aperture_channel}. At $t_n=x_n/a$, the free-space focusing
phase and blockage-induced phase mismatch are sampled from
\eqref{eq:free_space_focused_phase} and
\eqref{eq:blockage_phase_mismatch}, respectively. Their sum gives the
phase-matched target in \eqref{eq:ideal_phase}, up to an irrelevant common
phase offset. The samples are ordered by increasing $x_n$ and continuously
unwrapped before projection so that artificial $2\pi$ discontinuities are not
interpreted as high-order aperture-phase variation. The resulting unwrapped
phase sequence is the input to the projection stage.

\noindent\textbf{2) Contribution-weighted phase projection:}
The second stage maps the unwrapped phase target to the cubic polynomial
coefficients using the sampled projection in
Corollary~\ref{cor:discrete_projection}. This fixed-dimensional projection
replaces iterative Airy-parameter search. Each element is weighted by
$w_n=A_G(x_n)|h_{\rm e}(x_n;x_r)|$, i.e., its fixed Gaussian
amplitude multiplied by the magnitude of its effective channel. Phase errors
at elements that contribute more strongly to the received field are therefore
penalized more heavily than errors at weakly contributing elements. The
weighted Gram matrix and phase moments yield
$(c_0,a_1,a_2,a_3)$ through the fixed-dimensional system in
\eqref{eq:discrete_projection}, where $c_0$ is a common phase and the remaining
coefficients provide the linear, quadratic, and cubic corrections.

\noindent\textbf{3) Airy control conversion:}
The final stage converts the fitted phase coefficients into feasible Airy
controls and the corresponding beamforming vector. The three phase
coefficients are first checked against the feasible Airy control range. If the
unconstrained solution violates this range, the bounded
projection in \eqref{eq:constrained_phase_projection} is applied using the
same weighted moments. The feasible coefficients are then converted into
$(B,F,\theta)$ through \eqref{eq:coefficient_control_inverse}, and the sampled
Gaussian-cubic phase generates the practical ULA beamforming vector according
to \eqref{eq:discrete_beamforming_vector} under the fixed Gaussian aperture
envelope. Algorithm~\ref{alg:one_shot_airy} summarizes the complete
realization.

\begin{algorithm}[!t]
\caption{One-Shot Airy Phase Realization}
\label{alg:one_shot_airy}
\begin{algorithmic}[1]
\REQUIRE Geometry $\mathcal I$, $\lambda$, $a$, $\omega_0$, and array
positions $\{x_n\}_{n=1}^{N}$
\ENSURE Airy controls $\bm p=(B,F,\theta)$ and beamforming vector
$\bm w(\bm p)$
\STATE \textbf{Stage 1: Blockage-aware phase-matched target}
\FOR{$n=1,\ldots,N$}
	\STATE Evaluate $K_{z_r}(x_r,x_n)$ and $T_{\rm e}(x_n;x_r)$
	\STATE Set $\phi_{{\rm F},n}=-\arg K_{z_r}(x_r,x_n)$ and
	$\Delta\phi_{{\rm e},n}=-\arg T_{\rm e}(x_n;x_r)$
	\STATE Store $\psi_n^{(0)}=\phi_{{\rm F},n}+\Delta\phi_{{\rm e},n}$
	\STATE Set $w_n=A_G(x_n)|K_{z_r}(x_r,x_n)T_{\rm e}(x_n;x_r)|$
\ENDFOR
\STATE Unwrap $\{\psi_n^{(0)}\}_{n=1}^{N}$ continuously in increasing
$x_n$ to obtain $\{\psi_n\}_{n=1}^{N}$
\STATE \textbf{Stage 2: Weighted phase projection}
\STATE Accumulate the weighted $4\times4$ Gram matrix and phase moments
\STATE Solve \eqref{eq:discrete_projection} for
$(c_0,a_1,a_2,a_3)$
\STATE \textbf{Stage 3: Airy control conversion}
\IF{$(a_1,a_2,a_3)$ is outside the admissible coefficient intervals}
	\STATE Apply the finite active-set version of
	\eqref{eq:constrained_phase_projection} using the same discrete moments
\ENDIF
\STATE Convert $(a_1,a_2,a_3)$ to $(B,F,\theta)$ using
\eqref{eq:coefficient_control_inverse}
\STATE Construct $\bm w(\bm p)$ using
\eqref{eq:discrete_beamforming_vector}
\end{algorithmic}
\end{algorithm}

\subsection{Complexity Analysis}

The projection problem has fixed dimension and therefore introduces only
constant complexity. The remaining operations, including effective-channel
evaluation, weighted-moment accumulation, and sampled phase generation, scale
linearly with the number of antenna elements. The total complexity is therefore
\begin{equation}
	\mathcal O(N).
\end{equation}
The moments can also be accumulated sequentially, so the additional memory
can be constant apart from the output beamforming vector.
Consequently, the proposed framework provides a direct geometry-to-Airy
mapping with linear complexity in the number of antenna elements. It does not
search over candidate trajectories or Airy configurations; instead, it maps
the blockage-induced phase requirement directly onto the realizable Airy
phase basis and then to $(B,F,\theta)$.

\section{Numerical Results}
\label{sec:results}

This section validates the proposed phase-compensation mechanism through
numerical evaluations. We first quantify the gain contribution of successive
phase components, then visualize the field-level consequence of the identified
compensation. We next characterize when the Airy cubic correction becomes
relevant and when cubic-order compensation is sufficient, and finally verify
the accuracy of the analytical realization.

\subsection{Simulation Setup}

The carrier frequency is $140$ GHz. The transmitter employs a $256$-element
ULA with half-wavelength spacing, resulting in a $0.273$-m aperture. The
Gaussian aperture width is $\omega_0=a$, and the default receiver-window
width is $10$ mm. All methods use the same aperture amplitude and
transmit-power normalization.

The focused Gaussian beam is used as the conventional near-field baseline,
with $B=0$, $F=z_r$, and
$\theta=\arcsin(x_r/z_r)$. The feasible Airy-control range is set as
$|B|\leq3$, $0.25z_r\leq F\leq1.20z_r$, and
$|\theta|\leq5^\circ$. We evaluate $720$ single-edge blockage geometries
generated independently of any beam-design result. Continuous aperture and
receiver integrations are computed using Gauss--Legendre quadrature.

\begin{figure*}[!t]
    \centering
    \includegraphics[width=0.98\textwidth]{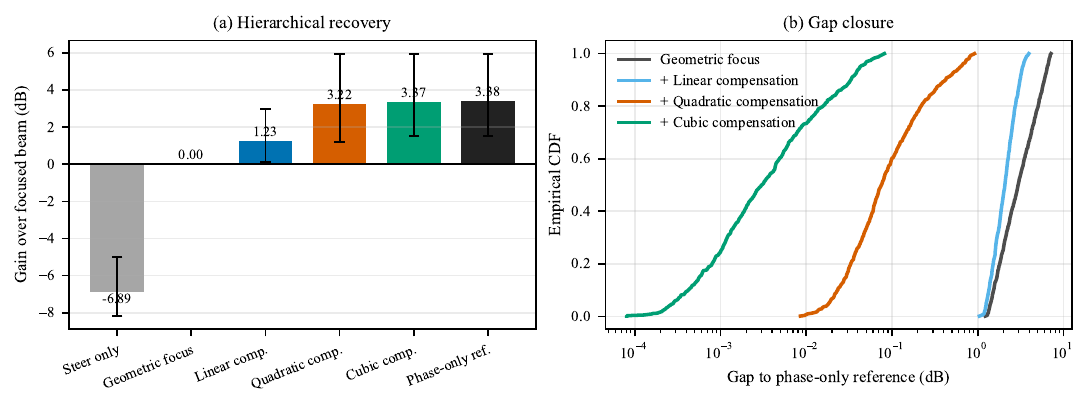}
    \caption{Effectiveness of phase compensation under edge blockage.
    (a) Received-power recovery from conventional focusing through successive
    linear, quadratic, and cubic phase compensation. (b) Empirical cumulative
    distribution of the gap to the phase-only upper bound across the evaluated
    blockage geometries.}
    \label{fig:phase_order_hierarchy}
\end{figure*}

\begin{figure*}[!t]
    \centering
    \includegraphics[width=0.91\textwidth]{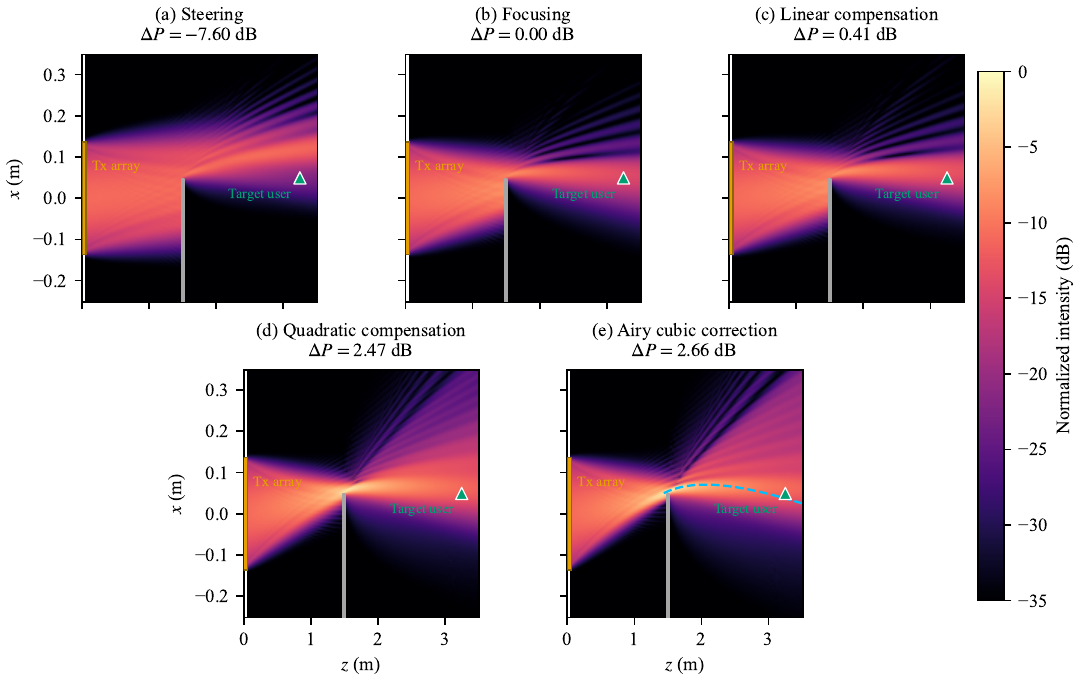}
    \caption{Propagated-field visualization of the phase-compensation
    mechanism. All cases use the same Gaussian aperture and transmit power;
    only the source phase is changed. The panel annotations $\Delta P$ denote
    received-power gain relative to conventional focusing. The yellow rectangle
    and green triangle mark the transmit array and target user, respectively.}
    \label{fig:full_space_comparison}
\end{figure*}

\begin{figure*}[!t]
    \centering
    \includegraphics[width=0.47\textwidth]{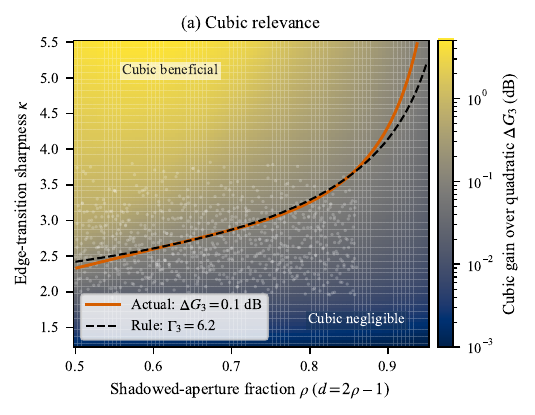}
    \hfill
    \includegraphics[width=0.47\textwidth]{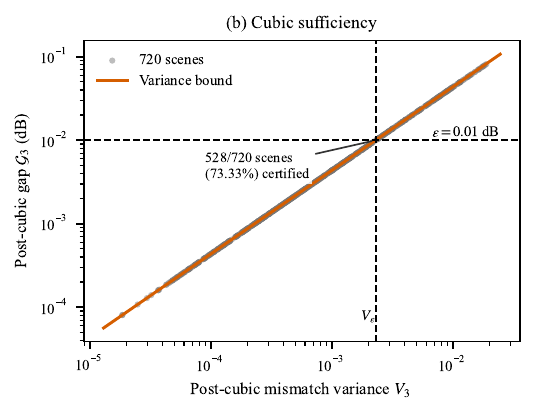}
    \caption{Cubic relevance and sufficiency. (a) Geometry dependence of the
    cubic gain $\Delta G_3$ over quadratic compensation. The solid contour marks
    $\Delta G_3=0.1$ dB, the dashed contour denotes the analytical
    $\Gamma_3=6.2$ relevance rule, and the dots represent the evaluated scenes.
    (b) Residual-variance certificate after cubic compensation. The solid curve
    is the bound in \eqref{eq:variance_gap_certificate}; at $0.01$ dB, the
    dashed thresholds certify $528$ scenes ($73.33\%$).}
    \label{fig:phase_order_geometry_boundary}
\end{figure*}

\subsection{Effectiveness of Phase Compensation}

We first evaluate whether the proposed phase-compensation framework can
effectively recover the received power lost to blockage. Fig.~\ref{fig:phase_order_hierarchy}
shows the gain contribution of successive phase components relative to the
phase-only upper bound; panel (b) reports the corresponding empirical
cumulative distribution function (CDF).

Starting from conventional near-field focusing, linear compensation improves
the received power by $1.228$ dB on average, while quadratic compensation
provides the dominant additional gain of $1.995$ dB. The Airy cubic component
contributes an additional $0.1433$ dB on average, thereby closing the residual
gap to the phase-only upper bound from $0.1532$ dB to $0.00991$ dB.
Equivalently, in linear power, the cubic stage reaches $99.77\%$ of the
phase-only upper bound.

These results demonstrate the central phase-compensation mechanism:
the quadratic phase degree of freedom provides the dominant blockage
compensation, while the Airy cubic component serves as a higher-order
refinement for the remaining phase mismatch. Therefore, the curved Airy trajectory
should be understood as a propagated consequence of source-phase compensation
rather than the fundamental origin of the recovery.

\subsection{Field-Level Interpretation of Phase Compensation}

We next visualize the field-level consequence of the identified phase
compensation in Fig.~\ref{fig:full_space_comparison}.
To connect the propagated field with its source-phase mechanism,
Fig.~\ref{fig:phase_mismatch_extraction}
compares the conventional focusing phase with the phase-matched reference. The
difference between them represents the additional phase requirement introduced
by blockage. Fig.~\ref{fig:hierarchical_phase_compensation} then shows its
successive compensation: linear compensation removes the first-order
variation, quadratic compensation removes the dominant curvature variation,
and the Airy cubic component corrects the remaining non-quadratic residual.

The field evolution confirms the same hierarchy observed in received-power
recovery: quadratic compensation provides the dominant improvement, while the
Airy cubic component provides the final refinement. The curved trajectory
appears only after source-phase compensation is applied and therefore
represents the spatial manifestation of the compensated wavefront, rather than
the mechanism that produces the recovery.

\subsection{Cubic Relevance and Sufficiency}

Fig.~\ref{fig:phase_order_geometry_boundary} jointly characterizes when the
Airy cubic is beneficial and when it is sufficient. In panel (a), the color
denotes the received-power increment $\Delta G_3$ obtained by adding the cubic
component after quadratic compensation. The $0.1$-dB contour separates
geometries in which the cubic correction is negligible from those in which it
is beneficial. Cubic relevance is jointly controlled by the edge-transition
sharpness and the clear-side aperture span, rather than by the blockage ratio
alone. Panel (b) provides the complementary post-cubic certificate: at the
$0.01$-dB tolerance, $V_3\leq V_{\varepsilon_{\rm dB}}$ certifies $528$ of the
$720$ evaluated scenes, or $73.33\%$.

\subsection{Accuracy of the Analytical Realization}

We finally verify whether the identified phase compensation can be accurately
realized within the Gaussian-cubic Airy control family. The proposed weighted
phase projection achieves a mean gap of $9.19\times10^{-5}$ dB to continuous
phase optimization over all 720 geometries. On the 72-scene stratified subset,
its mean and maximum gaps to independently initialized broad search are
$8.62\times10^{-5}$ dB and $2.93\times10^{-4}$ dB, respectively. Detailed
distributional and numerical consistency checks are provided in
Appendix~\ref{app:numerical_validation}.

\section{Conclusions}

This paper explained why cubic-phase Airy beamforming is sufficient for
blockage recovery from a phase-compensation perspective. By identifying the
additional phase mismatch relative to conventional near-field focusing, we
revealed that the existing linear and quadratic phase degrees of freedom can
be reoptimized under blockage to provide resteering and refocusing. The
quadratic term compensates the dominant quadratic component of the mismatch,
while the Airy cubic provides the first independent correction to the
remaining non-quadratic component. Numerical results validate this mechanism:
lower-order phase terms recover most of the available gain, while the cubic
term closes nearly all of the remaining gap and enables the Airy family to
reach $99.77\%$ of the phase-only upper bound.
More broadly, these results shift the interpretation of Airy blockage
recovery from trajectory design to phase compensation. The curved trajectory
is a propagated consequence of the compensated source phase rather than the
physical origin of received-power recovery. This viewpoint suggests a general
principle for structured beamforming under blockage: first characterize the
phase requirement imposed by the blocked channel, then exploit existing
lower-order phase degrees of freedom before introducing higher-order
structure. The sufficiency of a structured beam can therefore be assessed
through its remaining phase mismatch.
Future work will extend this phase-compensation analysis beyond the current
single-edge Fresnel model to more general blockage scenarios and practical
array implementations.

\appendices
\section{Local Edge Approximation and Geometry Reduction}
\label{app:geometry_reduction}

\subsection{Local Edge-Intersection Approximation}

The full weighted phase fit in \eqref{eq:weighted_phase_projection} is already
one-shot, but it still requires weighted moment accumulation. A local
approximation makes the dependence of the Airy controls on the edge geometry
explicit. Define the normalized aperture coordinate whose geometrical
source-to-receiver contribution intersects the obstacle edge:
\begin{equation}
	t_{\rm e}=\frac{z_rx_e-z_ox_r}{(z_r-z_o)a}.
	\label{eq:edge_center_coordinate}
\end{equation}
At $t=t_{\rm e}$, the argument of \eqref{eq:edge_transfer} is zero. The local rule
below applies when $t_{\rm e}\in[-1,1]$, so the edge transition intersects the
illuminated aperture. When this condition or the physical-branch conditions
are not satisfied, the full weighted phase fit in
\eqref{eq:weighted_phase_projection} is used instead. Define
\begin{equation}
	\kappa
	=a\sqrt{\frac{\pi(z_r-z_o)}{\lambda z_oz_r}},
	\qquad
	r_2=\frac{2}{\sqrt{\pi}}\left(\frac{4}{\pi}-1\right).
	\label{eq:edge_sharpness}
\end{equation}
For $\ell=1,2,3$, define
$\psi_\ell=\left.\mathrm d^\ell\psi_{\rm req}/\mathrm dt^\ell
\right|_{t=t_{\rm e}}$. The first three derivatives are
\begin{equation}
\begin{aligned}
	\psi_1&=\frac{2\pi a(x_r-at_{\rm e})}{\lambda z_r}
	+s\sqrt{\frac{2}{\pi}}\kappa,\\
	\psi_2&=-\frac{2\pi a^2}{\lambda z_r}
	-\frac{4}{\pi}\kappa^2,\\
	\psi_3&=s\sqrt{2}\,r_2\kappa^3.
	\label{eq:required_phase_derivatives}
\end{aligned}
\end{equation}
We call this the edge-intersection (EI) expansion. Matching these derivatives
with $a_1t+a_2t^2+a_3t^3$ gives
\begin{equation}
\begin{aligned}
	a_3^{\rm EI}&=\frac{\psi_3}{6},\\
	a_2^{\rm EI}&=\frac{\psi_2}{2}-3a_3^{\rm EI}t_{\rm e},\\
	a_1^{\rm EI}&=\psi_1-t_{\rm e}\psi_2+3a_3^{\rm EI}t_{\rm e}^2.
	\label{eq:explicit_coefficients}
\end{aligned}
\end{equation}
In particular, the physical bending scale is
\begin{equation}
	\boxed{
	\begin{aligned}
	B_{\rm EI}
	&=C_Bs\sqrt{\frac{z_r-z_o}{\lambda z_oz_r}},\\
	C_B&=\frac{1}{2\sqrt{\pi}}
	\left(\frac{r_2}{\sqrt{2}}\right)^{1/3}
	\approx0.1698.
	\end{aligned}
	}
	\label{eq:explicit_bending}
\end{equation}
The bending direction is determined by the visible side, while its magnitude
follows a Fresnel geometry scale. Equation~\eqref{eq:explicit_bending} is a
third-order local phase-matching rule. It is not claimed to be the exact
finite-aperture optimum, for which
\eqref{eq:weighted_phase_projection} is the more accurate analytical design.

\subsection{Two-Parameter Geometry Reduction}

The original geometry information $\mathcal I$ contains five variables, which
obscures when a cubic correction is actually needed. The preceding local
approximation suggests that the full weighted phase fit may depend on fewer
geometric combinations. To expose them, define the mirrored coordinate,
normalized edge position, and geometric blockage ratio as
\begin{equation}
	\tau=st,\qquad
	d=st_{\rm e}=2\rho-1,\qquad
	\rho=\frac{1+st_{\rm e}}{2}.
	\label{eq:dimensionless_geometry}
\end{equation}
Here, mirroring by $s$ places both visible-side cases in one coordinate
system, $d$ locates the edge transition within the normalized aperture, and
$\rho$ is the geometrically shadowed aperture fraction. The parameter
$\kappa$ in \eqref{eq:edge_sharpness} is the edge-transition sharpness: it
measures how rapidly the edge-diffraction response varies across the aperture.
Substituting \eqref{eq:edge_center_coordinate} and
\eqref{eq:edge_sharpness} into \eqref{eq:edge_transfer} gives the exact
normalized edge-diffraction response
\begin{equation}
	T_{\rm e}(\tau;\kappa,d)
	=\frac{1}{2}\operatorname{erfc}\left[
	e^{-\jmathunit\pi/4}\kappa(d-\tau)\right].
	\label{eq:dimensionless_edge_transfer}
\end{equation}

\begin{proposition}[Two-parameter geometry reduction and stage sufficiency]
\label{prop:geometry_order_certificate}
For the fixed even Gaussian aperture and a point receiver, every normalized
power or phase-mismatch quantity produced by a stage-$m$ weighted phase fit,
including $V_m$, $P_m^{\rm proj}/P_{\rm ub}$, and the cubic capture ratios,
depends on $\mathcal I$ only through $(\kappa,d)$ and not on the visible-side
sign separately. Moreover, if
\begin{equation}
	\mathcal G_m
	=10\log_{10}\frac{P_{\rm ub}}{P_m^{\rm proj}},
	\label{eq:projected_gap_definition}
\end{equation}
then, for $V_m<2$,
\begin{equation}
	\boxed{
	\mathcal G_m
	\leq-20\log_{10}\left(1-\frac{V_m}{2}\right).
	}
	\label{eq:variance_gap_certificate}
\end{equation}
Consequently, for any prescribed tolerance $\varepsilon_{\rm dB}>0$,
$\mathcal G_m\leq\varepsilon_{\rm dB}$ dB is guaranteed whenever
\begin{equation}
	V_m\leq V_{\varepsilon_{\rm dB}}
	=2\left(1-10^{-\varepsilon_{\rm dB}/20}\right).
	\label{eq:variance_tolerance}
\end{equation}
\end{proposition}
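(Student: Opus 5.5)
The proof has two independent parts: a change of variables for the geometry reduction, and a pointwise cosine bound for the certificate \eqref{eq:variance_gap_certificate}. Throughout, I interpret the stage-$m$ fit as in \eqref{eq:nested_compensation_spaces}. That is, the beam phase is $\phi_{\rm F}$ plus the weighted projection of $\Delta\phi_{\rm e}$ onto $\mathcal V_m^{\rm poly}=\operatorname{span}\{1,t,\ldots,t^m\}$, with $\mathcal V_2,\mathcal V_3$ as defined earlier. For $m\ge2$ this coincides with projecting $\psi_{\rm req}$, because $-\arg K_{z_r}$ is quadratic in $t$. I also assume the unconstrained fit, so the bounded projection is not active, and a nonsingular Gram matrix. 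In either reading the stage-$m$ residual is $e_m=(I-\Pi_m^W)\Delta\phi_{\rm e}$, up to sign. The fitted beam therefore produces
$$E_{\rm b}(x_r)=e^{\jmathunit c}\,a\int_{-1}^{1}W(at)\,e^{-\jmathunit e_m(t)}\,\mathrm dt,$$
where $c$ is a common phase. This holds because $A_G|h_{\rm e}|e^{\jmathunit(\psi_{\rm h}+\text{fitted phase})}$ collapses to $W$ times the residual phase factor.

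\emph{Geometry reduction.} First I rewrite $T_{\rm e}(at;x_r)$ in normalized form. From \eqref{eq:composed_center} we have $x_e-\bar x_o(at;x_r)=(La/z_r)(t_{\rm e}-t)$, using \eqref{eq:edge_center_coordinate}. We also have $\sqrt{\pi/(\lambda L_{\rm eff})}\,La/z_r=\kappa$. Hence the erfc argument in \eqref{eq:edge_transfer} equals $e^{-\jmathunit\pi/4}\kappa\,s(t_{\rm e}-t)=e^{-\jmathunit\pi/4}\kappa(d-\tau)$ with $\tau=st$, which is exactly \eqref{eq:dimensionless_edge_transfer}. Since $|K_{z_r}|=1/\sqrt{\lambda z_r}$ is constant, $W(at)$ is a geometry-dependent constant times $e^{-t^2a^2/\omega_0^2}|T_{\rm e}(st;\kappa,d)|$, where $\omega_0/a$ is fixed. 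Likewise $\Delta\phi_{\rm e}(t)=-\arg T_{\rm e}(st;\kappa,d)$. Constant factors in $W$ cancel in $\E_W$, in $\Pi_m^W$, in $\eta_3$, and in $P_m^{\rm proj}/P_{\rm ub}=|\int W e^{-\jmathunit e_m}|^2/(\int W)^2$. To remove $s$, I substitute $t=s\tau$ in every integral over $[-1,1]$. Three facts make this work: the Gaussian is even, the interval is symmetric, and each polynomial space $\mathcal V_m^{\rm poly}$ is invariant under $t\mapsto -t$. Together they imply that the weighted projections commute with the reflection. Consequently $e_m(s\tau)$, $V_m$, $\eta_3$ and the normalized powers are the same functionals of $(\kappa,d)$ for $s=\pm1$. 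The same argument applies to $r_3$, since $\mathcal V_3=\operatorname{span}\{1,t,t^2,t^3\}$.

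\emph{Variance certificate.} Because $1\in\mathcal V_m$, the residual is $W$-orthogonal to constants, so $\E_W[e_m]=0$ and $V_m$ is the centered weighted variance. I then bound
$$|E_{\rm b}(x_r)|\ge \operatorname{Re}\Big(a\int W e^{-\jmathunit e_m}\Big)=a\int W\cos e_m\ge aZ_W\big(1-\tfrac12 V_m\big),$$
using $\cos x\ge 1-x^2/2$, which holds for all real $x$. Unlike Theorem~\ref{thm:phase_projection}, this step needs no small-mismatch assumption. For $V_m<2$ the right-hand side is positive, so squaring and dividing by $P_{\rm ub}=(aZ_W)^2$ gives $P_m^{\rm proj}/P_{\rm ub}\ge(1-V_m/2)^2$. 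Taking $10\log_{10}$ yields \eqref{eq:variance_gap_certificate}. Solving $-20\log_{10}(1-V/2)\le\varepsilon_{\rm dB}$ for $V$ then gives \eqref{eq:variance_tolerance}, because the bound is monotone in $V$.

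The step I expect to need the most care is the reflection argument, not the inequality. It requires continuous phase branches for $\arg T_{\rm e}$ that are themselves compatible with $t\mapsto -t$; any constant $2\pi$ offset is absorbed by $1\in\mathcal V_m$. It also requires the admissible coefficient box to be sign-symmetric if the bounded projection is active, which holds for $|B|\le3$ and $|\theta|\le5^\circ$. I would state the proposition for the unconstrained fit and remark separately that the $F$-interval constraint does not involve the reflection.
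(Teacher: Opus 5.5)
Your proof is correct and follows the same route as the paper. The paper obtains the reduction from the normalized edge response $T_{\rm e}(\tau;\kappa,d)$ together with the even Gaussian and the reflection-invariant polynomial spaces, and obtains the certificate from $|\E_W[e^{-\jmathunit e_m}]|\ge 1-V_m/2$, which is exactly what your real-part bound $a\int W\cos e_m\ge aZ_W(1-V_m/2)$ establishes. Restricting to the unconstrained fit is the right scope. Your side remark on the bounded projection understates the issue, though: the admissible box for the correction coefficients is shifted by the free-space coefficients, which depend on $x_r/z_r$ and $z_r$ rather than on $(\kappa,d)$ alone. This does not affect the statement as posed.
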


\begin{IEEEproof}
The reduction follows directly from the normalized edge response
$T_{\rm e}(\tau;\kappa,d)$: the even Gaussian envelope and the polynomial
phase spaces remove any separate dependence on the visible-side sign. The
sufficiency bound follows from
$|\E_W[e^{-\jmathunit e_m}]|\geq1-V_m/2$; taking decibels and solving for the
prescribed tolerance give \eqref{eq:variance_gap_certificate} and
\eqref{eq:variance_tolerance}, respectively.
\end{IEEEproof}

Fig.~\ref{fig:phase_order_geometry_boundary}(b) verifies the certificate for
the cubic stage. At $\varepsilon_{\rm dB}=0.01$ dB, the condition
$V_3\leq V_{\varepsilon_{\rm dB}}$ certifies $528$ of the $720$ evaluated
scenes, or $73.33\%$.

The remaining phase-mismatch variance in
\eqref{eq:variance_tolerance} is a strict certificate that beamforming stage
$m$ is sufficient, but it is available only after the weighted phase fit has
been evaluated. For a simpler preliminary indication of cubic relevance,
\eqref{eq:required_phase_derivatives} gives the edge-induced third derivative
at the transition. A quadratic phase supplies constant curvature, whereas the
Airy cubic component supplies its first spatial variation. This motivates the
cubic-relevance indicator
\begin{equation}
	\boxed{
	\Gamma_3
	=|\psi_3|(1-d)
	=\sqrt{2}\,r_2\kappa^3(1-d),
	}
	\label{eq:curvature_variation_number}
\end{equation}
where $1-d$ is the normalized clear-side width from the edge to the aperture
endpoint. Hence cubic relevance is controlled jointly by edge-transition
sharpness and the aperture span over which the induced curvature varies,
rather than by the blockage ratio alone. Section~\ref{sec:results} tests
\eqref{eq:curvature_variation_number} as a low-cost cubic-relevance rule; the
strict beamforming-stage sufficiency condition remains
\eqref{eq:variance_tolerance}.

\subsection{Physical Interpretation of Cubic Relevance}

Define the effective edge distance and edge Fresnel number
\begin{equation}
	L_{\rm e}=\frac{z_oz_r}{z_r-z_o},\qquad
	\mathcal F_{\rm e}
	=\frac{a^2}{\lambda L_{\rm e}}
	=\frac{a^2(z_r-z_o)}{\lambda z_oz_r}.
	\label{eq:edge_fresnel_number}
\end{equation}
Then $\kappa^2=\pi\mathcal F_{\rm e}$, and
\begin{equation}
	\boxed{
	\begin{aligned}
	\Gamma_3
	&=C_\Gamma(1-d)\mathcal F_{\rm e}^{3/2}\\
	&=2C_\Gamma(1-\rho)\mathcal F_{\rm e}^{3/2},\\
	C_\Gamma
	&=\sqrt{2}r_2\pi^{3/2}\approx2.4279.
	\end{aligned}
	}
	\label{eq:gamma_fresnel_scaling}
\end{equation}
Equation~\eqref{eq:gamma_fresnel_scaling} provides a physical interpretation
of the cubic-relevance indicator in terms of the edge Fresnel number and
aperture geometry; the strict stage-sufficiency certificate remains
\eqref{eq:variance_tolerance}.

\subsection{Weak-Blockage Limit}

\begin{corollary}[Free-space degeneration]
\label{cor:free_space}
If the edge-diffraction response approaches a constant complex gain across the
illuminated aperture, then no independent Airy cubic correction remains. Hence
\begin{equation}
	B\rightarrow0,\qquad F\rightarrow z_r,\qquad
	\sin\theta\rightarrow x_r/z_r.
	\label{eq:free_space_degeneration}
\end{equation}
\end{corollary}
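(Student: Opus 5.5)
The plan is to follow the weighted projection of Theorem~\ref{thm:phase_projection} in the limit where $T_{\rm e}(\cdot;x_r)\to g$ uniformly on the illuminated aperture, with $g\neq0$ a constant complex gain.

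\textbf{Step 1: the required phase collapses onto $\mathcal V_2$.} In the limit, $\arg T_{\rm e}\to\arg g$ is constant in $t$. By \eqref{eq:phase_decomposition} and \eqref{eq:projection_basis},
\[
\psi_{\rm req}(t)=-\arg K_{z_r}(x_r,at)-\arg T_{\rm e}(at;x_r).
\]
The first term is exactly quadratic in $t$ under the Fresnel kernel \eqref{eq:fresnel_kernel}:
\[
-\frac{\pi}{\lambda z_r}(x_r-at)^2+\text{const}=-\frac{\pi x_r^2}{\lambda z_r}+\frac{2\pi a x_r}{\lambda z_r}t-\frac{\pi a^2}{\lambda z_r}t^2+\text{const}.
\]
So $\psi_{\rm req}$ converges to an element of $\mathcal V_2$, namely the free-space focusing phase $\phi_{\rm F}$ of \eqref{eq:free_space_focused_phase} plus a constant. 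The weight $W(at)=A_G(at)|K_{z_r}||T_{\rm e}|$ converges to $|g|A_G(at)/\sqrt{\lambda z_r}$, a positive even Gaussian weight. The limiting weighted Gram matrix of $\{1,t,t^2,t^3\}$ is therefore positive definite, since the monomials are linearly independent on $[-1,1]$. Hence the projection \eqref{eq:weighted_phase_projection} depends continuously on $(W,\psi_{\rm req})$ in this regime.

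\textbf{Step 2: the cubic coefficient vanishes.} By \eqref{eq:cubic_edge_projection},
\[
a_3^{\rm proj}=-\frac{\langle r_3,\arg T_{\rm e}\rangle_W}{\langle r_3,r_3\rangle_W}.
\]
Here $r_3\perp_W\mathcal V_2$ by construction \eqref{eq:orthogonal_cubic_basis}, so $\langle r_3,\arg g\rangle_W=0$, and $\langle r_3,r_3\rangle_W$ stays bounded away from zero. Writing $\arg T_{\rm e}=\arg g+\epsilon(t)$ with $\|\epsilon\|_\infty\to0$, Cauchy--Schwarz gives
\[
|a_3^{\rm proj}|\le\frac{\|\epsilon\|_W}{\|r_3\|_W}\to0.
\]
The same bound shows $e_2\to0$ in $\|\cdot\|_W$. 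Then Proposition~\ref{prop:cubic_capture} gives $V_2-V_3\to0$, which is the statement that no independent Airy cubic correction remains. The inverse map \eqref{eq:coefficient_control_inverse}, together with continuity of the real cube root, then yields $B\to0$.

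\textbf{Step 3: the linear and quadratic coefficients.} Since $\psi_{\rm req}$ tends to an element of $\mathcal V_2$, the full cubic fit reproduces it exactly in the limit. By continuity the fitted coefficients converge to
\[
a_2\to-\frac{\pi a^2}{\lambda z_r},\qquad a_1\to\frac{2\pi a x_r}{\lambda z_r}.
\]
Inverting \eqref{eq:coefficient_control_inverse} gives $F=-\pi a^2/(\lambda a_2)\to z_r$ and $\sin\theta=\lambda a_1/(2\pi a)\to x_r/z_r$. These lie on the physical branch, and strictly inside it when the limiting values are interior to the feasible control set, so the bounded projection \eqref{eq:constrained_phase_projection} is inactive.

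\textbf{Expected obstacle.} The delicate point is the phase unwrapping and branch issues for $\arg T_{\rm e}$. Uniform convergence of $T_{\rm e}$ to a nonzero constant implies uniform convergence of the continuous-branch argument only after a common $2\pi$ offset has been fixed. That offset sits in $\mathcal V_2$ and is absorbed by $c_0$, so I would settle it by noting that $|T_{\rm e}|$ is bounded below near the limit, which makes the continuous argument a continuous functional of $T_{\rm e}$.
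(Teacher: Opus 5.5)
Your proposal is correct and follows the route the paper relies on. The paper gives no separate proof; it reads the corollary off \eqref{eq:cubic_edge_projection}: a constant $\arg T_{\rm e}$ lies in $\mathcal V_2$ and is therefore $W$-orthogonal to $r_3$, so $a_3=0$ and $B=0$, while the remaining fit reproduces the quadratic free-space phase $\phi_{\rm F}$, whose coefficients invert to $F=z_r$ and $\sin\theta=x_r/z_r$. Your uniform-convergence, Gram-matrix continuity, Cauchy--Schwarz, and branch-fixing steps make rigorous the limit that the paper leaves implicit.
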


Corollary~\ref{cor:free_space} establishes the weak-blockage consistency of the
phase-compensation interpretation. As the edge ceases to distort the aperture
phase, the structured cubic realization returns continuously to focused
Gaussian transmission without a manually selected blockage threshold.

\section{Theoretical Proofs and Realization Details}
\label{app:theoretical_details}

This appendix provides the formal proofs and then bridges the continuous
phase-compensation theory to discrete-array and finite-window realizations.

\subsection{Proofs}

\begin{IEEEproof}[Proof of Theorem~\ref{thm:phase_projection}]
After removal of the weighted mismatch mean, the normalized field is
$\E_W[e^{\jmathunit\delta}]$. Taylor expansion gives
\begin{equation}
	\E_W[e^{\jmathunit\delta}]
	=1-\frac{1}{2}\E_W[\delta^2]+\mathcal R_{\delta,3},
	\qquad |\mathcal R_{\delta,3}|
	\leq\frac{1}{6}\E_W|\delta|^3.
\end{equation}
Because $\E_W[\delta]=0$, the first-order term vanishes. Taking the squared
magnitude of the expansion above gives
\eqref{eq:coherent_power_expansion}; the assumed uniform bound on $\delta$
also absorbs the fourth-order product into the stated third-order remainder.

It remains to determine the phase coefficients that minimize the second-order
loss. Let $r_{\bm q}(t)=\bm q^{\mathsf T}\bm X(t)-\psi_{\rm req}(t)$. Since the
constant basis function is included in $\bm X$, optimizing $c_0$ centers
$r_{\bm q}$, and minimizing $\E_W[\delta^2]$ is therefore equivalent to
minimizing the weighted least-squares objective
\begin{equation}
	J(\bm q)=\int_{-1}^{1}W(at)r_{\bm q}^2(t)\,\mathrm dt.
\end{equation}
Its stationarity condition is
\begin{equation}
	\left(\int_{-1}^{1}W(at)\bm X\bm X^{\mathsf T}\,\mathrm dt\right)\bm q
	=\int_{-1}^{1}W(at)\bm X\psi_{\rm req}\,\mathrm dt.
\end{equation}
When the weighted Gram matrix is nonsingular, solving this normal equation
gives \eqref{eq:weighted_phase_projection}, completing the proof.
\end{IEEEproof}

\begin{IEEEproof}[Proof of Proposition~\ref{prop:cubic_capture}]
Since $r_3$ is orthogonal to $\mathcal V_2$, extending
$\mathcal V_2$ to $\mathcal V_3$ adds exactly one orthogonal direction.
Moreover, $e_2=(I-\Pi_2^W)\psi_{\rm req}$ implies
$\langle\psi_{\rm req},r_3\rangle_W=\langle e_2,r_3\rangle_W$. Hence
orthogonal projection onto $\mathcal V_3$ gives
\begin{equation}
	e_3=e_2-
	\frac{\langle e_2,r_3\rangle_W}{\langle r_3,r_3\rangle_W}r_3.
\end{equation}
The two terms on the right-hand side are orthogonal. The Pythagorean identity
therefore gives \eqref{eq:cubic_residual_reduction}, and division by
$\|e_2\|_W^2$ gives \eqref{eq:cubic_capture_factor}.
\end{IEEEproof}

\subsection{Discrete Airy Realization}

The projected cubic coefficient in \eqref{eq:cubic_edge_projection} maps to
the physical bending parameter as
\begin{equation}
	B^{\rm proj}
	=\frac{\sqrt[3]{3a_3^{\rm proj}}}{2\pi a}.
	\label{eq:projected_bending}
\end{equation}

\begin{corollary}[Discrete-array phase fitting]
\label{cor:discrete_projection}
For the physical ULA, let $A_n=A_G(x_n)$ and
$h_n=h_{\rm e}(x_n;x_r)$. Define $t_n=x_n/a$, let
$\bm X_N\in\mathbb R^{N\times4}$ have row
$[1,t_n,t_n^2,t_n^3]$, set
$\bm W_N=\operatorname{diag}(A_n|h_n|)$, and define
$[\bm\psi_{\rm req}]_n$ as the continuous unwrapped samples of
$-\arg h_n$. Replacing the continuous moments by sums gives
\begin{equation}
	\widehat{\bm q}_N
	=(\bm X_N^{\mathsf T}\bm W_N\bm X_N)^{-1}
	\bm X_N^{\mathsf T}\bm W_N\bm\psi_{\rm req}.
	\label{eq:discrete_projection}
\end{equation}
Discretization therefore changes the weighted moments, not the
phase-compensation mechanism.
\end{corollary}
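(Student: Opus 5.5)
The plan is to redo the argument of Theorem~\ref{thm:phase_projection} with the continuous weighted measure $W(at)\,\mathrm dt$ on $[-1,1]$ replaced by the discrete measure $\sum_n A_n|h_n|\,\delta_{t_n}$. All earlier steps used only the abstract structure of a finite positive weighted inner product, so they carry over unchanged.

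\textbf{Discrete field.} First, I would write the sampled received field as $E_N=\sum_n A_n h_n e^{\jmathunit\phi(t_n)}$. This is the ULA counterpart of \eqref{eq:single_integral_field}, with $\bm w(\bm p)$ from \eqref{eq:discrete_beamforming_vector}; the common normalization is irrelevant to the fit. Using the polar form \eqref{eq:effective_channel_polar} at each element, I get
\begin{equation}
E_N=\sum_n w_n e^{\jmathunit(\phi(t_n)+\psi_{\rm h}(x_n))},\qquad w_n=A_n|h_n|.
\end{equation}
The discrete triangle bound $|E_N|\le Z_N:=\sum_n w_n$ then plays the role of \eqref{eq:triangle_bound}. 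It is attained by phase alignment, which means $\phi(t_n)=[\bm\psi_{\rm req}]_n$ up to a constant. Because $2\pi$ shifts leave every exponential unchanged, the unwrapped samples give the same field as the wrapped ones. Unwrapping only picks the branch on which a low-order polynomial fit is meaningful.

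\textbf{Small-mismatch expansion.} Next, I would define the discrete weighted mean $\E_N[f]=Z_N^{-1}\sum_n w_n f(t_n)$ and the centered residual $\delta_n$ exactly as in \eqref{eq:centered_phase_residual}. The Taylor argument from the proof of Theorem~\ref{thm:phase_projection} applies verbatim, since it only used linearity of $\E_W$, positivity of the weights, and $\E_W[\delta]=0$. It yields
\begin{equation}
|E_N|^2/Z_N^2=1-\E_N[\delta^2]+\mathcal O(\E_N|\delta|^3).
\end{equation}

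\textbf{Normal equations.} Because the constant column is in $\bm X_N$, minimizing $\E_N[\delta^2]$ over $\bm q$ is the weighted least-squares problem
\begin{equation}
J_N(\bm q)=(\bm X_N\bm q-\bm\psi_{\rm req})^{\mathsf T}\bm W_N(\bm X_N\bm q-\bm\psi_{\rm req}).
\end{equation}
Its gradient condition is $\bm X_N^{\mathsf T}\bm W_N\bm X_N\bm q=\bm X_N^{\mathsf T}\bm W_N\bm\psi_{\rm req}$, and solving it gives \eqref{eq:discrete_projection}.

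\textbf{Main obstacle.} The one step requiring care is invertibility of the $4\times4$ Gram matrix, which the continuous statement simply assumed. I would argue it directly. The matrix $\bm X_N^{\mathsf T}\bm W_N\bm X_N$ is positive definite iff no nonzero cubic polynomial vanishes at every $t_n$ with $w_n>0$. A nonzero cubic has at most three roots, so it suffices that at least four distinct elements have $w_n>0$. This holds because $A_n>0$ on the aperture and $|h_n|>0$: $K_{z_r}$ never vanishes, and the principal-branch $\operatorname{erfc}$ in \eqref{eq:edge_transfer} has no zeros along the line $e^{-\jmathunit\pi/4}\R$. Equivalently, the Fresnel half-line integral is nonzero, which I would verify from the Cornu spiral, since it never passes through the origin.

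\textbf{Conclusion.} Finally, the orthogonal-decomposition argument of Proposition~\ref{prop:cubic_capture} transfers to the discrete inner product in the same way. This justifies the closing remark that discretization changes only the weighted moments, not the mechanism.
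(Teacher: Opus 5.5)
Your proposal is correct and follows the paper's route: the paper obtains \eqref{eq:discrete_projection} by rerunning the small-mismatch expansion and the weighted normal equations of Theorem~\ref{thm:phase_projection}, with the integrals against $W(at)\,\mathrm dt$ replaced by sums over the array elements. Your explicit check that $\bm X_N^{\mathsf T}\bm W_N\bm X_N$ is positive definite goes beyond the paper, which only inherits the nonsingularity proviso from the theorem, and the check is valid; the precise fact to cite is that the half-line Fresnel integral $\int_w^{\infty}e^{\jmathunit\pi t^2/2}\,\mathrm dt$ never vanishes for real $w$, i.e., the Cornu spiral never reaches its asymptotic limit point.
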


The unconstrained fit may correspond to Airy controls outside their physical
ranges. Define the coefficient-domain image of the feasible controls as
\begin{equation}
	\mathcal Q_{\rm A}
	=\{[c_0,a_1,a_2,a_3]^{\mathsf T}:
	c_0\in\R,\ (B,F,\theta)\in\mathcal P_{\rm A}\}.
\end{equation}
The corresponding constrained projection is
\begin{equation}
	\widehat{\bm q}_{\rm A}
	=\arg\min_{\bm q\in\mathcal Q_{\rm A}}
	\int_{-1}^{1} W(at)\left[\bm q^{\mathsf T}\bm X(t)
	-\psi_{\rm req}(t)\right]^2\,\mathrm dt.
	\label{eq:constrained_phase_projection}
\end{equation}
For box-bounded physical controls, this fixed-dimensional convex problem is
solved by finite active-set enumeration and reduces to
\eqref{eq:weighted_phase_projection} for an interior solution.

\subsection{Finite-Window Extension}

Define $P(\bm a,x)=|E_{\rm b}(x;\bm a)|^2$ and the average power over a
receiver window of width $W_r$ as
$\overline P_{W_r}(\bm a)=W_r^{-1}\int_{x_r-W_r/2}^{x_r+W_r/2}
P(\bm a,x)\,\mathrm dx$.

\begin{proposition}[Finite receiver-window perturbation]
\label{prop:finite_window}
Suppose that $P$ is four-times continuously differentiable near
$(\bm a_0,x_r)$ and that $\bm a_0$ is a strict point-receiver maximizer with
nonsingular Hessian
$\bm H_0=\nabla_{\bm a}^{2}P(\bm a_0,x_r)$. The nearby finite-window
maximizer $\bm a_{W_r}$ then satisfies
\begin{equation}
	\bm a_{W_r}
	=\bm a_0-\frac{W_r^2}{24}\bm H_0^{-1}
	\nabla_{\bm a}\partial_x^2P(\bm a_0,x_r)
	+\mathcal O(W_r^4),
	\label{eq:window_parameter_perturbation}
\end{equation}
and the window-power loss from using $\bm a_0$ instead of
$\bm a_{W_r}$ is $\mathcal O(W_r^4)$.
\end{proposition}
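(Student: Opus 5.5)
We treat the finite-window maximizer as a perturbation of the point-receiver maximizer $\bm a_0$, with $W_r^2$ as the small parameter. The proof has three steps: expand the window-averaged power in $W_r$, locate the perturbed maximizer with the implicit function theorem, and bound the power loss from using $\bm a_0$.

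\textbf{Step 1: expansion of the window average.}
Write $x=x_r+u$ with $u\in[-W_r/2,W_r/2]$. Taylor-expand $P(\bm a,x_r+u)$ in $u$ to third order with a fourth-order remainder, uniformly for $\bm a$ in a compact neighborhood of $\bm a_0$. This uses the assumed $C^4$ regularity. Over the symmetric interval the odd powers of $u$ integrate to zero, and $W_r^{-1}\int u^2\,\mathrm du=W_r^2/12$. This gives
\begin{equation}
	\overline P_{W_r}(\bm a)=P(\bm a,x_r)+\frac{W_r^2}{24}\partial_x^2P(\bm a,x_r)+\mathcal O(W_r^4).
\end{equation}
The same expansion holds for the $\bm a$-gradient, with remainder $\mathcal O(W_r^4)$. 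This follows by differentiating under the integral, which is valid because $\nabla_{\bm a}P$ is $C^3$.

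\textbf{Step 2: locating the maximizer.}
Define $G(\bm a,\epsilon)=\nabla_{\bm a}P(\bm a,x_r)+\epsilon\,\nabla_{\bm a}\partial_x^2P(\bm a,x_r)/24+\epsilon^{2}R(\bm a,\epsilon)$, with $\epsilon=W_r^2$. Here $R$ is the bounded remainder from Step 1, so the window stationarity condition is $G=0$. At $(\bm a_0,0)$ we have $G=0$, and $\partial_{\bm a}G=\bm H_0$ is nonsingular.

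The implicit function theorem then yields a unique nearby stationary point $\bm a_{W_r}$. One subtlety is that $R$ is only bounded, not smooth in $\epsilon$. To handle this, I would use the contraction-mapping form of the argument: write $\bm a=\bm a_0+\bm\delta$ and solve $\bm\delta=-\bm H_0^{-1}[\,\cdots]$ by fixed-point iteration. The linearization gives $\bm\delta=-\frac{W_r^2}{24}\bm H_0^{-1}\nabla_{\bm a}\partial_x^2P(\bm a_0,x_r)+\mathcal O(W_r^4)$. The $\mathcal O(W_r^4)$ term collects three contributions: the quadratic Taylor term in $\bm\delta$ (which is $\mathcal O(|\bm\delta|^2)=\mathcal O(W_r^4)$), the $\bm a$-variation of the $\epsilon$-term, and the remainder $R$.

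Since $\bm a_0$ is a strict maximizer, $\bm H_0$ is negative definite. By continuity, the Hessian of $\overline P_{W_r}$ stays negative definite near $\bm a_0$ for small $W_r$. Hence $\bm a_{W_r}$ is a local maximizer, which establishes \eqref{eq:window_parameter_perturbation}.

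\textbf{Step 3: power loss.}
Expand $\overline P_{W_r}(\bm a_0)$ around $\bm a_{W_r}$. The gradient vanishes there, so
\begin{equation}
	\overline P_{W_r}(\bm a_{W_r})-\overline P_{W_r}(\bm a_0)=-\tfrac12\bm\delta^{\mathsf T}\nabla^2_{\bm a}\overline P_{W_r}\bm\delta+\mathcal O(|\bm\delta|^3)=\mathcal O(W_r^4).
\end{equation}
This holds because $|\bm\delta|=\mathcal O(W_r^2)$ and the Hessian is bounded.

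\textbf{Main obstacle.}
The delicate point is uniformity. The remainder in Step 1 must be controlled together with its $\bm a$-derivative on a fixed neighborhood, so that the implicit-function or contraction argument gives a genuine $\mathcal O(W_r^4)$ error rather than merely $o(W_r^2)$. I would handle this by bounding fourth derivatives of $P$ on a compact set and applying the integral form of Taylor's remainder.
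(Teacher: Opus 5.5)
Your route is the same as the paper's: a symmetric Taylor expansion of the window average giving the $W_r^2\partial_x^2P/24$ correction, then linearization of the window stationarity condition through the nonsingular $\bm H_0$, then a quadratic bound on the loss. You carry it out more carefully than the paper does, using the contraction form of the implicit function theorem and the negative definiteness of $\bm H_0$ to certify a maximizer.

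The step that fails as justified is in Step~1. You claim the $\bm a$-gradient of the window average has an $\mathcal O(W_r^4)$ remainder ``because $\nabla_{\bm a}P$ is $C^3$.'' A $C^3$ function of $u$ can only be expanded to second order with a third-order remainder. After the symmetric cancellation of the $u^3\partial_x^3\nabla_{\bm a}P(\bm a,x_r)$ term, what remains is only $o(W_r^3)$. An $\mathcal O(W_r^4)$ gradient remainder needs $\nabla_{\bm a}\partial_x^4P$ bounded, which is a fifth-order derivative (Lipschitz fourth derivatives would also do). Your remedy of bounding fourth derivatives of $P$ controls only the value remainder. An $\mathcal O(W_r^4)$ error in the objective alone locates a nondegenerate maximizer only to within $\mathcal O(W_r^2)$. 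The contraction step, by contrast, needs no separate $\bm a$-regularity of the remainder. The Jacobian of the full window gradient is the averaged Hessian $W_r^{-1}\int_{-W_r/2}^{W_r/2}\nabla_{\bm a}^2P(\bm a,x_r+u)\,\mathrm du$, which is close to $\bm H_0$ by continuity. So the only thing that matters is the size of the gradient residual at the predicted point $\bm a_0-\frac{W_r^2}{24}\bm H_0^{-1}\nabla_{\bm a}\partial_x^2P(\bm a_0,x_r)$.

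This is not just a proof artifact: at exactly $C^4$ the parameter expansion can fail. Take scalar $a$, $x_r=0$, $c=16/945$, and $P(a,x)=-\frac12a^2+c\,\mathrm{sgn}(a+x)|a+x|^{9/2}$.
\begin{itemize}
\item $P$ is $C^4$, since its fourth derivative in $y=a+x$ is $\mathrm{sgn}(y)|y|^{1/2}$.
\item $a_0=0$ is a strict maximizer with $H_0=-1$, and $\partial_a\partial_x^2P(0,0)=0$, so the stated formula predicts $a_{W_r}=\mathcal O(W_r^4)$.
\item Yet $\partial_a\overline P_{W_r}(0)=c(W_r/2)^{7/2}$, hence $a_{W_r}=c(W_r/2)^{7/2}(1+o(1))$.
\end{itemize}

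The paper's one-line ``expanding its stationarity condition'' step uses the same extra derivative without saying so. The repair therefore belongs in the hypothesis: assume $C^5$ (or Lipschitz fourth derivatives) near $(\bm a_0,x_r)$. This holds in the paper's setting, because $E_{\rm b}(x;\bm a)$ integrates over a compact aperture an integrand that is entire in $(x,\bm a)$, so $P$ is real-analytic. With that assumption your argument is complete. The $\mathcal O(W_r^4)$ power-loss claim is unaffected either way, since it needs only $\|\bm\delta\|=\mathcal O(W_r^2)$, which already holds under $C^4$.
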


\begin{IEEEproof}
A symmetric Taylor expansion about $x_r$ gives
$\overline P_{W_r}(\bm a)=P(\bm a,x_r)
+W_r^2\partial_x^2P(\bm a,x_r)/24+\mathcal O(W_r^4)$.
Expanding its stationarity condition about $\bm a_0$ yields
\eqref{eq:window_parameter_perturbation}. Because
$\bm a_{W_r}-\bm a_0=\mathcal O(W_r^2)$, the associated loss in the
finite-window objective is second order in this displacement and hence
$\mathcal O(W_r^4)$.
\end{IEEEproof}

This result connects the point-receiver phase analysis used for the analytical
derivation to the finite receiver window used in the simulations. Its
numerical accuracy is assessed in Appendix~\ref{app:numerical_consistency}.

\section{Numerical Validation and Scope Checks}
\label{app:numerical_validation}

This appendix validates the numerical references, identifies the propagation
regime in which the analytical model is used, and checks the consistency of
the reduced channel and practical realization.

\subsection{Reference Optimization Validation}

The stage references were audited by deterministic scans and independently
initialized searches over the adopted coefficient domain. The largest audit
advantage over the reported references is below $10^{-14}$ dB. Adding the
quartic phase term beyond the Airy cubic provides only $0.00815$ dB mean gain,
supporting cubic near-saturation without claiming exact higher-order
sufficiency.

An independently initialized broad search over 72 stratified scenes and an
expanded control domain has mean and maximum advantages of
$8.62\times10^{-5}$ dB and $2.93\times10^{-4}$ dB, respectively, over the
one-shot phase fit. All analytical solutions remain inside the narrower
reference domain, so the bounded fallback is not activated in the reported
experiments.

Fig.~\ref{fig:projection_validation} reports the corresponding empirical gap
distributions for the continuous local reference and the independently
initialized broad search.

\begin{figure}[!t]
    \centering
    \includegraphics[width=0.96\columnwidth]{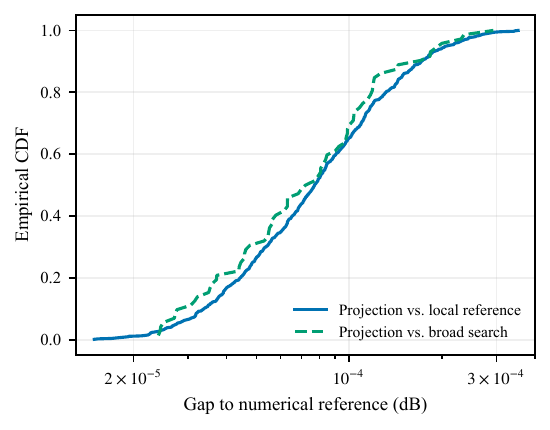}
    \caption{Accuracy of the analytical phase realization. Empirical cumulative
    distributions compare the weighted projection with the continuous local
    reference over all $720$ geometries and the independently initialized broad
    search over the $72$-scene stratified subset.}
    \label{fig:projection_validation}
\end{figure}

\subsection{Propagation Model Validation}

The Fresnel range expansion is audited through its omitted phase
\begin{equation}
	\epsilon_{\rm F}(v,\ell)
	=k\ell\left(1+\frac{v^2}{2}-\sqrt{1+v^2}\right),
	\label{eq:fresnel_phase_remainder}
\end{equation}
where $v$ is a transverse-to-axial slope over distance $\ell$. A scene
satisfies the adopted Fresnel criterion when the maximum remainder over the
source, Airy-propagation, transmitter-to-edge, and edge-to-receiver segments
does not exceed $0.5$ rad. The fitted controls satisfy this criterion in
$92.6\%$ of the scenes; the remaining scenes are not used to claim validity
beyond the paraxial regime.

As an independent propagation check, the same Gaussian amplitudes and phase
coefficients were evaluated by a scalar angular-spectrum method (ASM), without
reoptimizing any method, over 72 stratified Fresnel-valid scenes. The mean
phase-fit gains over focus are $3.2320$ dB and $3.2404$ dB under the analytical
Fresnel and ASM models, respectively, with a mean absolute difference of
$0.0126$ dB. Under ASM, the linear, quadratic, and cubic gain increments are
$1.1690$, $1.9276$, and $0.1439$ dB, respectively. The independent propagator
therefore preserves the lower-order-dominant hierarchy and the smaller,
conditional cubic refinement.

\subsection{Numerical Consistency Checks}
\label{app:numerical_consistency}

\noindent\textbf{1) Effective-channel reduction and quadrature:}
The single-integral effective-channel reduction and direct two-step Fresnel
propagation have a mean rank correlation of $0.99893$ over 18 scenes and 273
Airy configurations. Increasing the aperture and receiver quadrature orders
changes the fitted coefficients negligibly, and the continuously unwrapped
phase branch is numerically stable.

\noindent\textbf{2) Finite-window and discrete-array effects:}
The finite-window and discrete-array limits were evaluated on 72 scenes
selected evenly across blockage, obstacle distance, and visible side. For the
default 10-mm receiver window, the center-based phase fit is only
$8.62\times10^{-5}$ dB below the corresponding window-power reference. When
the window width is doubled to 20 mm, the gap increases by approximately a
factor of 16, in agreement with the $\mathcal O(W_r^4)$ prediction in
\eqref{eq:window_parameter_perturbation}. Even at 40 mm, the mean gap remains
$0.0196$ dB.

For the half-wavelength 256-element ULA, applying the continuous analytical
controls instead of recomputing the sampled-aperture phase fit causes only
$1.97\times10^{-5}$ dB mean loss. The largest adjacent phase step is
$0.571$ rad, below $\pi$, across all tested scenes. Sampling therefore changes
the numerical moments without changing the underlying compensation mechanism.
Table~\ref{tab:consistency_checks} consolidates the principal checks across
this appendix.

\begin{table}[!t]
\centering
\caption{Analytical and Numerical Consistency Checks}
\label{tab:consistency_checks}
\setlength{\tabcolsep}{3.5pt}
\begin{tabular}{lcc}
\toprule
Check & Mean gap & Maximum gap \\
\midrule
Stage audit (worst stage)
& $1.54{\times}10^{-15}$ dB & $7.71{\times}10^{-15}$ dB \\
Phase fit to 10-mm reference (720)
& $9.19{\times}10^{-5}$ dB & $3.53{\times}10^{-4}$ dB \\
Phase fit to broad search (72)
& $8.62{\times}10^{-5}$ dB & $2.93{\times}10^{-4}$ dB \\
256-element discretization
& $1.97{\times}10^{-5}$ dB & $8.61{\times}10^{-5}$ dB \\
\bottomrule
\end{tabular}
\end{table}

\bibliographystyle{IEEEtran}
\bibliography{ref}

\begin{thebibliography}{10}
\providecommand{\url}[1]{#1}
\csname url@samestyle\endcsname
\providecommand{\newblock}{\relax}
\providecommand{\bibinfo}[2]{#2}
\providecommand{\BIBentrySTDinterwordspacing}{\spaceskip=0pt\relax}
\providecommand{\BIBentryALTinterwordstretchfactor}{4}
\providecommand{\BIBentryALTinterwordspacing}{\spaceskip=\fontdimen2\font plus
\BIBentryALTinterwordstretchfactor\fontdimen3\font minus \fontdimen4\font\relax}
\providecommand{\BIBforeignlanguage}[2]{{%
\expandafter\ifx\csname l@#1\endcsname\relax
\typeout{** WARNING: IEEEtran.bst: No hyphenation pattern has been}%
\typeout{** loaded for the language `#1'. Using the pattern for}%
\typeout{** the default language instead.}%
\else
\language=\csname l@#1\endcsname
\fi
#2}}
\providecommand{\BIBdecl}{\relax}
\BIBdecl

\bibitem{2021THz}
T.~K{\"u}rner, D.~M. Mittleman, and T.~Nagatsuma, Eds., \emph{{THz} Communications: Paving the Way Towards Wireless {Tbps} (Springer Series in Optical Sciences, 234)}.\hskip 1em plus 0.5em minus 0.4em\relax Springer Cham, 2021.

\bibitem{8732419}
T.~S. Rappaport, Y.~Xing, O.~Kanhere, S.~Ju, A.~Madanayake, S.~Mandal, A.~Alkhateeb, and G.~C. Trichopoulos, ``Wireless communications and applications above 100 {GHz}: Opportunities and challenges for {6G} and beyond,'' \emph{IEEE Access}, vol.~7, pp. 78\,729--78\,757, 2019.

\bibitem{shurakov2023empirical}
A.~Shurakov, D.~Moltchanov, A.~Prikhodko, A.~Khakimov, E.~Mokrov, V.~Begishev, I.~Belikov, Y.~Koucheryavy, and G.~Gol'tsman, ``Empirical blockage characterization and detection in indoor {Sub-THz} communications,'' \emph{Comput. Commun.}, vol. 201, pp. 48--58, 2023.

\bibitem{overview}
N.~K. Efremidis, Z.~Chen, M.~Segev, and D.~N. Christodoulides, ``{Airy} beams and accelerating waves: an overview of recent advances,'' \emph{Optica}, vol.~6, no.~5, pp. 686--701, May 2019.

\bibitem{airyfinite}
G.~A. Siviloglou and D.~N. Christodoulides, ``Accelerating finite energy {Airy} beams,'' \emph{Opt. Lett.}, vol.~32, no.~8, pp. 979--981, Apr. 2007.

\bibitem{airy1}
\BIBentryALTinterwordspacing
M.~V. Berry and N.~L. Balazs, ``Nonspreading wave packets,'' \emph{Am. J. Phys.}, vol.~47, no.~3, pp. 264--267, Mar. 1979. [Online]. Available: \url{https://doi.org/10.1119/1.11855}
\BIBentrySTDinterwordspacing

\bibitem{airy5}
G.~A. Siviloglou, J.~Broky, A.~Dogariu, and D.~Christodoulides, ``Observation of accelerating {Airy} beams,'' \emph{Phys. Rev. Lett.}, vol.~99, no.~21, p. 213901, Nov. 2007.

\bibitem{selfhealing1}
J.~Broky, G.~A. Siviloglou, A.~Dogariu, and D.~N. Christodoulides, ``Self-healing properties of optical {Airy} beams,'' \emph{Opt. Express}, vol.~16, no.~17, pp. 12\,880--12\,891, 2008.

\bibitem{selfhealing2}
X.~Chu, G.~Zhou, and R.~Chen, ``Analytical study of the self-healing property of {Airy} beams,'' \emph{Phys. Rev. A}, vol.~85, no.~1, p. 013815, 2012.

\bibitem{hanchong}
W.~Zhao, S.~Abadal, G.~Song, J.~Jiang, and C.~Han, ``{Terahertz} wireless data center: {Gaussian} beam or {Airy} beam?'' \emph{IEEE Trans. Wireless Commun.}, vol.~25, pp. 7922--7938, 2026.

\bibitem{songlingyang}
S.~Zhang, B.~Di, and L.~Song, ``Breaking near-field communication barriers: Focused, curved, or {Airy} beamforming?'' \emph{arXiv preprint arXiv:2604.01704}, 2026.

\bibitem{wang2026multiairy}
Y.~Wang and L.~Dai, ``Blockage-robust beamforming for near-field communications: From single-airy to multi-airy,'' \emph{arXiv preprint arXiv:2607.07278}, 2026.

\bibitem{hanchong2}
W.~Zhao, C.~Han, and E.~Bj{\"o}rnson, ``{Airy} beam engineering in near-field communications: A tractable closed-form analysis in the {Terahertz} band,'' \emph{arXiv preprint arXiv:2603.13866}, 2026.

\bibitem{lee}
D.~Lee, Y.~Yagi, K.~Suzuoki, and R.~Kudo, ``Experimental demonstration of wireless transmission using {Airy} beams in {Sub-THz} band,'' \emph{IEEE Open J. Commun. Soc.}, vol.~6, pp. 1091--1102, Jan. 2025.

\bibitem{curving}
H.~Guerboukha, B.~Zhao, Z.~Fang, E.~Knightly, and D.~M. Mittleman, ``Curving {THz} wireless data links around obstacles,'' \emph{Commun. Eng.}, vol.~3, no.~1, p.~58, 2024.

\bibitem{zhao2026efficienttraining}
W.~Zhao and C.~Han, ``Efficient {Airy} beam training for quasi-{LoS} {Terahertz} near-field communications,'' \emph{arXiv preprint arXiv:2605.09895}, 2026.

\bibitem{wang2026airy}
\BIBentryALTinterwordspacing
Y.~Wang and L.~Dai, ``Physics-guided neural airy beamforming for near-field blockage mitigation,'' 2026. [Online]. Available: \url{https://arxiv.org/abs/2608.04388}
\BIBentrySTDinterwordspacing

\bibitem{optics}
J.~W. Goodman, \emph{Introduction to {Fourier} Optics}.\hskip 1em plus 0.5em minus 0.4em\relax McGraw-Hill, 1968.

\end{thebibliography}

\end{document}